\newcolumntype{L}[1]{>{\raggedright\let\newline\\\arraybackslash\hspace{0pt}}m{#1}}
\newtheorem{theorem}{Theorem}[section]
\newtheorem{lemma}[theorem]{Lemma}
\newtheorem{definition}[theorem]{Definition}
\newtheorem{corollary}[theorem]{Corollary}
\newcommand{\eq}[1]{\hyperref[eq:#1]{(\ref*{eq:#1})}}
\renewcommand{\sec}[1]{\hyperref[sec:#1]{Section~\ref*{sec:#1}}}
\newcommand{\thm}[1]{\hyperref[thm:#1]{Theorem~\ref*{thm:#1}}}
\newcommand{\lem}[1]{\hyperref[lem:#1]{Lemma~\ref*{lem:#1}}}
\newcommand{\cor}[1]{\hyperref[cor:#1]{Corollary~\ref*{cor:#1}}}
\newcommand{\app}[1]{\hyperref[app:#1]{Appendix~\ref*{app:#1}}}
\newcommand{\dfn}[1]{\hyperref[dfn:#1]{Definition~\ref*{dfn:#1}}}
\newcommand{\fig}[1]{\hyperref[fig:#1]{Figure~\ref*{fig:#1}}}
\newcommand{\tab}[1]{\hyperref[tab:#1]{Table~\ref*{tab:#1}}}
\newcommand{\bra}[1]{\langle #1 \vert}
\newcommand{\ket}[1]{\vert #1 \rangle}
\newcommand{\proj}[1]{\vert #1\rangle\!\langle #1\vert}
\newcommand{\tr}[0]{\mathrm{tr}}
\newcommand{\Tr}[0]{\mathrm{Tr}}
\newcommand{\Span}[0]{\mathrm{span}}
\DeclareMathOperator*{\Exp}{\mathbb{E}}
\newcommand{\1}[0]{\mathbbm{1}}
\newcommand{\A}[0]{\mathcal{A}}
\renewcommand{\d}[0]{\mathrm{d}}
\newcommand{\F}[0]{\mathbb{F}}
\newcommand{\Fq}[0]{\mathbb{F}_q}
\newcommand{\R}[0]{\mathbb{R}}
\newcommand{\mv}[0]{\mathsf{Mv}}
\newcommand{\vm}[0]{\mathsf{vM}}
\newcommand{\vmv}[0]{\mathsf{vMv}}
\newcommand{\mvphase}[0]{\widetilde{\mathsf{Mv}}}
\newcommand{\swap}[0]{\mathsf{SWAP}}
\renewcommand{\Re}{\mathop{\mathrm{Re}}}
\begin{document}

\title{Quantum query complexity with matrix-vector products}
\author[1]{Andrew M.\ Childs}
\author[1]{Shih-Han Hung}
\author[1,2]{Tongyang Li}

\affil[1]{Joint Center for Quantum Information and Computer Science, Department of Computer Science, and Institute for Advanced Computer Studies, University of Maryland}

\affil[2]{Center for Theoretical Physics, Massachusetts Institute of Technology}
\date{}

\maketitle

\begin{abstract}
We study quantum algorithms that learn properties of a matrix using queries that return its action on an input vector. We show that for various problems, including computing the trace, determinant, or rank of a matrix or solving a linear system that it specifies, quantum computers do not provide an asymptotic speedup over classical computation. On the other hand, we show that for some problems, such as computing the parities of rows or columns or deciding if there are two identical rows or columns, quantum computers provide exponential speedup. We demonstrate this by showing equivalence between models that provide matrix-vector products, vector-matrix products, and vector-matrix-vector products, whereas the power of these models can vary significantly for classical computation.
\end{abstract}

\section{Introduction} \label{sec:intro}

Algorithms for linear algebra problems---for example, solving linear systems and determining basic properties of matrices such as rank, trace, determinant, eigenvalues, and eigenvectors---constitute a fundamental research area in applied mathematics and theoretical computer science. Such tasks have widespread applications in scientific computation, statistics, operations research, and many other related areas. Algorithmic linear algebra also provides a fundamental toolbox that can inspire the design of algorithms in general.

There are several possible models of access to a matrix, and linear-algebraic algorithms can depend significantly on how the input is represented (as discussed further below). One natural model is the \emph{matrix-vector product} ($\mv$) oracle. For a matrix $M\in\F^{n\times m}$ in a given field $\F$, the $\mv$ oracle takes $x\in\F^m$ as input and outputs $Mx\in\F^n$. Matrix-vector products arise, for example, as the elementary step of the power method (and the related Lanczos method) for computing the largest eigenvector of a matrix. Matrix-vector products also commonly appear in streaming algorithms, especially in the technique of sketching (see the survey \cite{Wood14} for more information).

Recent work has studied the classical complexity of various basic problems in the $\mv$ model. Specifically, Sun, Woodruff, Yang, and Zhang \cite{SWYZ19} studied the complexities of various linear algebra, statistics, and graph problems using matrix-vector products, and Braverman, Hazan, Simchowitz, and Woodworth \cite{BHSW20} proved tight bounds on maximum eigenvalue computation and linear regression in this model. Rashtchian, Woodruff, and Zhu \cite{RWZ20} considered a generalization to the vector-matrix-vector product ($\vmv$) oracle, which returns $x^\top\! M y$ for given input vectors $x \in \F^n, y \in \F^m$, and studied the complexity of various linear algebra, statistics, and graph problems in this setting. \tab{main} includes a partial summary of these results.

Quantum computers can solve certain problems much faster than classical computers, so it is natural to study quantum query complexity with matrix-vector products.
Lee, Santha, and Zhang recently studied the quantum query complexity of graph problems with cut queries \cite{LSZ20}, which are closely related to matrix-vector queries.
For a weighted graph $G=(V,w)$ where $|V|=n$ and $w$ assigns a nonnegative integer weight to each edge, the input of a cut query is a subset $S\subseteq V$ and the output is $|w(S,V\setminus S)|$, the total weight of the edges between $S$ and $V\setminus S$. This can be viewed as a version of the $\vmv$ model over $\mathbb{Z}$, with the extra assumptions that $x \in \{0,1\}^n, y \in \{0,1\}^m$ are both boolean and $M$ is a symmetric matrix with nonnegative integer entries. Reference \cite{LSZ20} gives quantum algorithms for determining all connected components of $G$ with $O(\log^{6}n)$ quantum cut queries, and for outputting a spanning forest of $G$ with $O(\log^{8}n)$ quantum cut queries. Both problems require $\Omega(n/\log n)$ classical cut queries, so the quantum algorithms provide exponential speedups.

In other recent work on structured queries for graph problems, Montanaro and Shao studied the problem of learning an unknown graph with ``parity queries'' \cite{MS20}:
for an unknown graph with adjacency matrix $A$, the parity oracle takes as input a string $x$ that encodes a subset of the vertices, and returns $x^\top\! A x \bmod 2$. This query model is the $\vmv$ model over $\F_2$ with the extra restriction that the left and right vectors are identical.

Van Apeldoorn and Gribling studied Simon's problem for linear functions over a prime field $\F_p$ \cite{vAG18}. In this problem, the oracle encodes a linear function $f\colon\F_p\to\F_p$, and the task is to determine if the function is one-to-one, or if there is a one-dimensional subspace $H\subset\F_p$ such that for every $x,x'\in\F_p^n$, $f(x)=f(x')$ if and only if $x-x'\in H$.
Such a function can be represented by a square matrix over $\F_p$, and the problem is equivalent to determining whether that matrix is full rank or has nullity 1 using matrix-vector product queries.

Other past work has developed linear algebraic quantum algorithms using different input models. Quantum algorithms for high-dimensional linear algebra have been studied extensively since Harrow, Hassidim, and Lloyd introduced a method for generating a quantum state proportional to the solution of a large, sparse system of linear equations \cite{HHL09}. This algorithm assumes a quantum oracle that determines the locations and values of the nonzero entries of a matrix in any given row or column, and the ability to generate a quantum state that encodes the right-hand side of the linear system. Subsequent work has led to improved and generalized algorithms under similar assumptions.
However, it is challenging to find practical applications that achieve speedup over classical computation \cite{Chi09,Aar15}. Recent work by Apers and de Wolf \cite{AdW20} gives polynomial quantum speedup for producing an explicit classical description of the solution of a Laplacian linear system, assuming adjacency-list access to the underlying graph of the Laplacian.
Note also that for various problems including determinant estimation, rank testing, linear regression, etc., there is a large separation between the classical query complexities under $\mv$ and entrywise queries ($\tilde{\Theta}(n)$ \cite{SWYZ19} and $\Theta(n^{2})$,
respectively). These results show how the model of access to a matrix can significantly impact the complexity of solving linear-algebraic problems.  A better understanding of the quantum matrix-vector oracle could therefore provide a useful tool for the design of future quantum algorithms.

\paragraph{Contributions.}
We conduct a systematic study of quantum query complexity with a matrix-vector oracle for a matrix $M\in\Fq^{m\times n}$, where $\Fq$ is a given finite field. Using this model, we provide results on the quantum query complexities of linear algebra and statistics problems.

First, we prove that various linear algebra problems, including
\begin{itemize}[noitemsep]
\item computing the trace $\tr(M)$ of $M\in\Fq^{n\times n}$;
\item computing the determinant $\det(M)$ of $M\in\Fq^{n\times n}$;
\item solving the linear system $Ax=b$ for $A\in\Fq^{n\times n}$; and
\item testing whether $\mathrm{rank}(M)=n$ or $\mathrm{rank}(M)\leq n/2$ for a matrix $M\in\mathbb{F}_q^{m\times n}$;
\end{itemize}
require $\Omega(n)$ quantum queries to the $\mv$ oracle. Since $O(n)$ queries suffice to determine the entire matrix, even classically, these results show that no quantum speedup is possible. (As a side effect, we improve the $\Omega(n/\log n)$ classical lower bound for trace computation \cite{SWYZ19} to $\Omega(n)$.)

Our quantum lower bound for trace computation applies results of Copeland and Pommersheim \cite{CP18} by viewing the problem as a special case of \emph{coset identification}. Our lower bounds for other linear algebra problems are all proved by the polynomial method \cite{BBC+01,Aar02}. We show how to symmetrize the success probability to a univariate polynomial, and then give a lower bound on the polynomial degree using an observation of Koiran, Nesme, and Portier\ \cite{KNP07}.

On the other hand, we determine the matrix-vector quantum query complexity of several statistics problems, including
\begin{itemize}[noitemsep]
\item computing the row and column parities of $M\in\F_{2}^{m\times n}$;
\item deciding if there exist two identical columns in $M\in\F_{2}^{m\times n}$; and
\item deciding if there exist two identical rows in $M\in\F_{2}^{m\times n}$.
\end{itemize}
Specifically, we prove that their quantum query complexities with an $\mv$ oracle are $O(1)$, $O(\log n)$, and $O(\log m)$, respectively. Compared to the classical bounds using either the $\mv$ oracle \cite{SWYZ19} or the $\vmv$ oracle \cite{RWZ20}, our quantum algorithms achieve \emph{exponential} quantum speedups.

Technically, these results build upon our observation that the quantum query complexities in the $\mv$ model under left or right multiplication are \emph{identical} (\thm{left-right-eq}). In particular, one right $\mv$ query can be simulated using one left $\mv$ query, and vice versa. In contrast, classically there is a significant difference between matrix-vector ($\mv$) and vector-matrix ($\vm$) queries---for example, computing the parity of rows over $\F_{2}$ only takes $O(1)$ $\mv$ queries, but computing the parity of columns over $\F_{2}$ requires $\Theta(n)$ $\mv$ queries. In contrast, for both problems a quantum computer can achieve the smaller query complexity by switching to the easier side.

\begin{table}
    \centering
    \renewcommand{\arraystretch}{1.4}
    \rowcolors{2}{gray!10}{white}
    \footnotesize
  \begin{tabular}{|L{28mm}|L{42mm}L{42mm}L{35.5mm}|}
    \hline \rowcolor{gray!50}
    Problem & Classical $\mv$ & Classical $\vmv$ & Quantum (this paper) \\
    \hline
    Trace
      & $O(n),\Omega(n/\log n)$ for matrix with entries in $\{0,1,\ldots,n^3\}$ \& queries with entries in $\{0,1,\ldots,n^C\}$, $C\in\mathbb{N}$ \cite{SWYZ19};\newline $\Theta(n)$ over $\Fq$ (\thm{trace})
      & $O(n),\Omega(n/\log n)$ for matrix with entries in $\{0,1,\ldots,n^3\}$ \& queries with entries in $\{0,1,\ldots,n^C\}$, $C\in\mathbb{N}$ \cite{RWZ20};\newline $\Theta(n)$ over $\Fq$ (\thm{trace})
      & $\Theta(n)$ over $\Fq$\newline(\thm{trace}) \\
    Linear regression
      & $\Theta(n)$ over $\mathbb{R}$ \cite{BHSW20};\newline $\Theta(n)$ over $\Fq$ (\thm{regression})
      & $\Theta(n^2)$ over $\Fq$ (\cor{regression-vmv})
      & $\Theta(n)$ over $\Fq$\newline(\thm{regression}) \\
    Rank testing
      & $k+1$ to distinguish rank $\leq k$ from $k'>k$ over $\mathbb{R}$ \cite{SWYZ19};\newline $\Theta(n)$ over $\Fq$ (\thm{rank-testing})
      & $\Omega(k^2)$ to distinguish rank $k$ from $k+1$ over $\Fq$ \cite{RWZ20};\newline $\Omega(n^{2-O(\epsilon)})$ for non-adaptive $(1\pm\epsilon)$-approximation over $\mathbb{R}$ \cite{RWZ20}
      & $\Theta(\min\{m,n\})$ to distinguish rank $\min\{m,n\}$ from $\leq \frac{1}{2}\min\{m,n\}$ over $\Fq$\newline(\thm{rank-testing}) \\
    Two identical columns
      & $O(n/m)$, $m=\Omega(\log(n/\epsilon))$\newline over $\mathbb{F}_2$ \cite{SWYZ19}
      & $O(n\log n),\Omega(n)$ over $\mathbb{F}_2$ \cite{RWZ20}
      & $O(\log n)$ over $\mathbb{F}_2$\newline(\cor{identical-rows}) \\
    Two identical rows
      & $O(\log m)$ over $\mathbb{F}_2$ \cite{SWYZ19}
      & $O(n\log n),\Omega(n)$ over $\mathbb{F}_2$ \cite{RWZ20}
      & $O(\log m)$ over $\mathbb{F}_2$\newline(\cor{identical-rows}) \\
    Majority of columns
      & $\Omega(n/\log n)$ for binary matrices over $\mathbb{R}$ \cite{SWYZ19}
      & $\Theta(n^2)$ over $\mathbb{F}_2$ \cite{RWZ20}
      & $O(1)$ for binary matrices over $\mathbb{R}$ (\cor{majority}) \\
    Majority of rows
      & $O(1)$ for binary matrices\newline over $\mathbb{R}$ \cite{SWYZ19}
      & $\Theta(n^2)$ over $\mathbb{F}_2$ \cite{RWZ20}
      & $O(1)$ for binary matrices over $\mathbb{R}$ (\cor{majority}) \\
    Parity of columns
      & $\Theta(n)$ over $\mathbb{F}_2$ \cite{SWYZ19}
      & $\Theta(n)$ over $\F_2$ (\lem{vmv-parities})
      & $O(1)$ over $\mathbb{F}_2$\newline(\cor{row-parities}) \\
    Parity of rows
      & $O(1)$ over $\mathbb{F}_2$ \cite{SWYZ19}
      & $\Theta(m)$ over $\F_2$ (\lem{vmv-parities})
      & $O(1)$ over $\mathbb{F}_2$\newline(\cor{row-parities})  \\
    \hline
  \end{tabular}
  \caption{Comparison of classical and quantum query complexities with matrix-vector ($\mv$) and vector-matrix-vector ($\vmv$) product oracles for an $m\times n$ matrix.
    For trace and linear regression, $m=n$.
    Known query complexities over $\R$ and $\Fq$ are included for completeness; results over different fields are incomparable in general.
  }
  \label{tab:main}
\end{table}

Our results are summarized in \tab{main}, including some implications of our results for classical query complexity and a few additional results over $\mathbb{R}$. Note that there can be large gaps between the classical query complexities with $\mv$ and $\vmv$ queries, but they are the same in the quantum setting due to an equivalence between quantum $\mv$ and $\vmv$ queries (\thm{vmv-eq}), which follows along similar lines to the equivalence between $\mv$ and $\vm$ queries. The $\mv$--$\vmv$ equivalence is closely related to a similar equivalence shown in the work of Lee, Santha, and Zhang \cite{LSZ20}, as we discuss further in \sec{vmv}.

\paragraph{Open questions.}
Our paper leaves several natural open questions for future investigation:
\begin{itemize}
\item For linear algebra problems such as those we studied, can we also prove quantum query lower bounds for matrices over the real field $\R$? Our proofs rely on the polynomial method, and it is unclear how to adapt them to a setting with continuous input.

\item Can we prove a quantum lower bound for the task of minimizing a quadratic form $f(x)=\frac{1}{2}x^{\top}Ax+b^{\top}x$, where $A\in\R^{n\times n}$ and $b\in\R^{n}$? Note that $f$ is minimized at $x=-A^{-1}b$, and we can determine the vector $b$ and implement $\mv$ queries to the matrix $A$ using fast quantum gradient computation \cite{Jor05}, so this is closely related to the previous open question. Quadratic form minimization is a special case of optimizing a convex function $f\colon\R^{n}\to\R$ by quantum evaluation queries, where previous works \cite{GKNS21,CCLW20,AGGW20} left a quadratic gap between the best known quantum upper and lower bounds of $\tilde{O}(n)$ and $\Omega(\sqrt{n})$, respectively.

\item For the finite field case, can we identify other problems with quantum speedup over the classical matrix-vector oracle, or find advantage compared to other quantum oracles such as entrywise queries?
\end{itemize}

\paragraph{Organization.}
We review necessary background in \sec{prelim}. We prove the equivalence of quantum matrix-vector and vector-matrix-vector product oracles in \sec{equivalence}. In \sec{linear-algebra}, we prove tight quantum query complexity lower bounds on various linear algebra problems, including trace, determinant, linear systems, and rank.

\section{Preliminaries}\label{sec:prelim}

\subsection{The quantum query model}\label{sec:query-model}

Given a set $X$ and an abelian group $G$, let $f\colon X\to G$ be a function.
Access to $f$ is provided by a black-box unitary operation $U_f\colon\ket{x,y}\mapsto\ket{x,y+f(x)}$ for all $x\in X$ and $y\in G$.
We call an application of $U_f$ a (standard) query.

For a finite abelian group $G$, the Fourier transform over $G$ is
\begin{align}
  F_G \coloneqq \frac{1}{|G|^{1/2}} \sum_{x\in G}\sum_{y\in\hat{G}} \chi_y(x) \ket{y}\!\bra{x},
\end{align}
where $\hat{G}$ is a complete set of characters of $G$, and $\chi_y\colon G\to\mathbb{C}$ denotes the $y^{th}$ character of $G$.
Since $\hat{G}\cong G$, we label elements of $\hat{G}$ using elements of $G$.
Note that $\chi_y$ is a group homomorphism, i.e., $\chi_y(x+z)=\chi_y(x)\chi_y(z)$.
In addition, the characters satisfy the orthogonality condition
\begin{align}\label{eq:characters-orthogonal}
  \frac{1}{|G|} \sum_{z\in G} \chi_y(z)^* \chi_w(z) = \delta_{yw}.
\end{align}

A phase query is defined as a standard query conjugated by the Fourier transform acting on the output register. In other words, for $x\in X$ and $y\in G$, a phase query acts as
\begin{align}\nonumber
  \ket{x,y}
  &\xmapsto{\mathbb{1}\otimes F_G^\dag} \frac{1}{|G|^{1/2}}\sum_{z\in G}\chi_{y}(z)^*\ket{x,z} \\\nonumber
  &\xmapsto{U_f} \frac{1}{|G|^{1/2}}\sum_{z\in G}\chi_{y}(z)^*\ket{x,z+f(x)} \\\label{eq:phase}
  &\xmapsto{\mathbb{1}\otimes F_G} \frac{1}{|G|}\sum_{z\in G}\chi_{y}(z)^*\chi_w(z+f(x))\ket{x,w} = \chi_y(f(x))\ket{x,y}.
\end{align}
The equality in \eq{phase} follows from the orthogonality condition in \eq{characters-orthogonal}.
Since one can simulate a phase query using a single standard query and vice versa, the query complexities of any problem are equal with these two models.

Over a finite field $\mathbb{F}_q$ for prime power $q=p^r$, the Fourier transform over $\Fq$ is the unitary transformation $\ket{x}\mapsto q^{-1/2}\sum_{y\in\Fq}e(xy)\ket{y}$,
where the exponential function $e\colon\Fq\to\mathbb{C}$ is defined as $e(z) \coloneqq e^{2\pi i\Tr_{\F_q/\F_p}(z)/p}$ and the trace function $\Tr_{\Fq/\F_p}\colon\Fq\to\mathbb{F}_p$ is defined as $\Tr_{\F_q/\F_p}(z) \coloneqq z+z^p+z^{p^2}+\dots+z^{p^{r-1}}$.

Over the field of real numbers, the quantum Fourier transform is
\begin{align}
  F_{\R} := \int_\R\d y\int_\R \d x \, e^{2\pi i yx}\ket{y}\!\bra{x}.
\end{align}
The basis states $\{\ket{x}:x\in\R\}$ are normalized to the Dirac delta function, i.e., for $x,x'\in\R$, $\langle x'|x\rangle=\delta(x-x')$.
Here the Dirac delta function $\delta$ satisfies $\int_\R\d x' \, \delta(x-x')f(x')=f(x)$ for any function $f$.
Furthermore, we have $\int_\R\d y \, e^{2\pi i y(x-x')}=\delta(x-x')$.
By direct calculation using these facts, $F^\dag_\R F_\R=\int_\R\d x \, \ket{x}\!\bra{x}=\mathbb{1}$.

While we can formally consider a model of query complexity over $\mathbb R$ with arbitrary precision, its practical instantiation requires discrete approximation. We can achieve precision $\epsilon$ by approximating real numbers with $s = O(\log(1/\epsilon))$ bits, and can then replace the continuous Fourier transform with the discrete Fourier transform over $\mathbb{Z}_{2^s}$.
It is straightforward to show that a discretized phase query over $\mathbb{Z}_{2^s}$ can be implemented by Fourier transforming a standard query that maps discretized inputs to discretized function values.

\subsection{The coset identification problem}
\label{sec:coset}

Copeland and Pommersheim studied a kind of quantum query problem that they call the \emph{coset identification problem} \cite{CP18}.
They define this problem in a generalized query model where the black box does not necessarily perform a standard or phase query, although their definition includes those cases.
In the coset identification problem, we fix a finite group $G$ and a subgroup $H \le G$. The algorithm is given access to a unitary transformation $\pi(g)$, where $\pi$ is a representation of $G$ on vector space $V$.
When $\pi$ is given, the vector space $V$ is called the representation space (or simply, the representation) of $G$ \cite[Chapter~1]{Ser77}.
The goal is to determine which coset of $H$ the unknown element $g \in G$ belongs to.

\begin{definition}[Coset identification problem \cite{CP18}]
  A coset identification problem for a finite group $G$ and subgroup $H\leq G$ is a 3-tuple $(\pi,V,F)$ such that
  \begin{itemize}[nosep]
  \item $\pi$ is a unitary representation of $G$ in the complex vector space $V$, and
  \item $F$ is a function constant on left cosets of $H\leq G$ and distinct on distinct cosets, i.e., $F(g)=F(g')$ if and only if $g'=gh$ for some $h\in H$.
  \end{itemize}
  Given a black box that performs the unitary transformation $\pi(g)$, the goal is to compute $F(g)$.
\end{definition}

Copeland and Pommersheim show that the optimal success probability of a $t$-query algorithm for a coset identification problem can be calculated by taking, over all irreps $Y$ of $H$, the maximum of the fraction of the induced representation $Y^\uparrow$ of $G$ shared with $V^{\otimes t}$.
Furthermore, the optimal algorithm can be non-adaptive.
For a representation $V$, let $I(V)$ denote the set of irreducible characters of $G$ appearing in $V$.

\begin{theorem}[Optimal success probability of coset identification {\cite[Corollary 5.7]{CP18}}]\label{thm:opt-cip}
  The optimal success probability of any $t$-query quantum algorithm $\A$ for the coset identification problem $(\pi,V,F)$ for finite group $G$ and subgroup $H\leq G$, under uniformly random inputs in $G$, is
  \begin{align}\label{eq:cip-max-pr}
    \Pr[\A^{\pi(g)} = F(g)] =
    \max_Y \frac{\dim Y_{V^{\otimes t}}^\uparrow}{\dim Y^\uparrow},
  \end{align}
  where the probability is maximized over all irreducible representations $Y$ of $H$, $Y^\uparrow$ is the induced representation of $G$, and $A_B$ is the maximal subrepresentation of $A$ such that $I(A_B)\subseteq I(B)$ for representations $A,B$.
\end{theorem}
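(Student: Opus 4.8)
\emph{Reduction to state discrimination.} I would follow \cite{CP18}, starting from their result that an optimal $t$-query algorithm for a coset identification problem may be taken non-adaptive. Such an algorithm queries $\pi(g)$ in $t$ parallel copies, so it is specified by a fixed unit vector $\ket{\psi}$ in $V^{\otimes t}\otimes\mathcal W$ (with workspace $\mathcal W$ on which the oracle acts trivially; I write $\pi^{\otimes t}(g)$ also for $\pi^{\otimes t}(g)\otimes\mathbb{1}_{\mathcal W}$), one application of $\pi^{\otimes t}(g)$, and a measurement $\{E_c\}_{c\in G/H}$ whose outcome is output as the guessed coset. For such an algorithm the success probability under uniform $g$ is
\begin{align}
  \frac{1}{|G|}\sum_{g\in G}\bra{\psi}\pi^{\otimes t}(g)^\dagger E_{gH}\,\pi^{\otimes t}(g)\ket{\psi}
  \;=\; \frac{1}{[G:H]}\sum_{c\in G/H}\Tr\!\left(E_c\,\rho_c\right),
\end{align}
where $\rho_c:=\frac{1}{|H|}\sum_{h\in H}\pi^{\otimes t}(g_c h)\proj{\psi}\pi^{\otimes t}(g_c h)^\dagger$ for a coset representative $g_c$, so that $\pi^{\otimes t}(g)\rho_c\pi^{\otimes t}(g)^\dagger=\rho_{gc}$ and $\rho_H$ is invariant under conjugation by $\pi^{\otimes t}(h)$, $h\in H$. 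Thus the problem becomes the discrimination of the $[G:H]$ equiprobable states $\{\rho_c\}_{c\in G/H}$, which form a single $G$-orbit. Since enlarging $\mathcal W$ only raises multiplicities of $G$-irreps already present in $V^{\otimes t}$ without changing the set $I(V^{\otimes t})$, and the claimed value depends on $V^{\otimes t}$ only through $I(V^{\otimes t})$, I may treat $\mathcal W$ as arbitrarily large for the algorithm and as irrelevant for the lower bound.

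\emph{Achievability.} Fix an irrep $Y$ of $H$ and set $W:=Y_{V^{\otimes t}}^{\uparrow}$, the largest $G$-subrepresentation of $Y^{\uparrow}\cong\mathbb{C}[G]\otimes_{\mathbb{C}[H]}Y$ all of whose irreducible constituents lie in $I(V^{\otimes t})$. As $\mathcal W$ supplies unbounded multiplicities of those irreps, there is a $G$-equivariant isometry $W\hookrightarrow V^{\otimes t}\otimes\mathcal W$, inside which I would place $\ket{\psi}$. Decompose $Y^{\uparrow}$ as a vector space into mutually orthogonal blocks $S_c$ (the copy of $Y$ lying over coset $c$), permuted by $G$ via $\pi_{Y^{\uparrow}}(g)S_c=S_{gc}$, and let $P_W$ and $P_{S_c}$ be the corresponding orthogonal projections ($P_W$ being $G$-equivariant). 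Taking $\ket{\psi}\propto P_W\ket{s_0}$ with $\ket{s_0}\in S_H$ and using the POVM $\{P_W P_{S_c}P_W\}$ on $W$, one checks that the probability of reporting the correct coset is $\|P_{S_H}\ket{\psi}\|^2$ for \emph{every} input $g$ (by $G$-covariance). Choosing $\ket{s_0}$ optimally and writing $B:=P_{S_H}P_WP_{S_H}$ (so $\Tr B=\dim W/[G:H]$, because the $\Tr(P_WP_{S_c})$ are equal and sum to $\Tr P_W$), the generalized Rayleigh quotient maximum is at least $\Tr(B^2)/\Tr B\ge(\Tr B)/\dim Y=\dim W/\dim Y^{\uparrow}$. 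Maximizing over $Y$ gives the ``$\geq$'' direction of \eq{cip-max-pr}.

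\emph{Optimality, and the main obstacle.} For the matching upper bound I would use strong duality for the state-discrimination semidefinite program: every POVM succeeds with probability at most $\Tr Z$ for any Hermitian $Z$ with $Z\succeq\frac{1}{[G:H]}\rho_c$ for all $c$. By the $G$-symmetry one may take $Z$ to be $G$-invariant, i.e.\ $Z=\bigoplus_{\rho\in I(V^{\otimes t})}\mathbb{1}_{\rho}\otimes Z_{\rho}$ relative to the $G$-isotypic decomposition $V^{\otimes t}\cong\bigoplus_{\rho}\rho\otimes\mathbb{C}^{m_{\rho}}$, whence it suffices to impose $[G:H]\,Z\succeq\rho_H$. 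Expanding $\rho_H=\bigoplus_Y\frac{\mathbb{1}_{\dim Y}}{\dim Y}\otimes\sigma_Y$ in the \emph{$H$-isotypic} decomposition of $\mathrm{Res}_H V^{\otimes t}$ (with $\sigma_Y\succeq0$, $\sum_Y\Tr\sigma_Y=1$, ranging over all choices of $\ket{\psi}$), one must show $\max_{\ket{\psi}}\min_Z\Tr Z=\max_Y\dim Y_{V^{\otimes t}}^{\uparrow}/\dim Y^{\uparrow}$; Frobenius reciprocity $\langle\mathrm{Res}_H\rho,Y\rangle_H=\langle\rho,Y^{\uparrow}\rangle_G$, together with $\dim Y^{\uparrow}=[G:H]\dim Y$ and $\dim Y_{V^{\otimes t}}^{\uparrow}=\sum_{\rho\in I(V^{\otimes t})}\langle\rho,Y^{\uparrow}\rangle_G\dim\rho$, then rewrites this as the stated closed form. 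I expect this equality to be the crux: the POVM-normalization constraint lives on the $G$-isotypic blocks while $\rho_H$ is block-diagonal in the \emph{different} $H$-isotypic decomposition, so one must carefully control the change of basis between the two structures. The cleanest route is probably to exhibit a dual-optimal $G$-invariant $Z$ built from the projections $P_W$, $W=Y_{V^{\otimes t}}^{\uparrow}$, of the achievability argument and verify that its trace matches---a self-consistency check between the primal strategy and the dual witness. Finally, the non-adaptivity reduction invoked at the outset, while not difficult, genuinely uses the group-covariant structure and should be cited from \cite{CP18}, since adaptivity can help for general oracle-identification problems.
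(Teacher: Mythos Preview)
The paper does not prove this theorem; it is quoted in the preliminaries (\sec{coset}) as Corollary~5.7 of Copeland and Pommersheim~\cite{CP18} and used as a black box in the trace lower bounds (\lem{binarytracelb} and \thm{trace}). There is therefore no proof in the paper against which to compare your proposal.

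For what it is worth, your sketch follows the natural route for such results---non-adaptivity, reduction to $G$-covariant state discrimination, SDP duality with a $G$-invariant dual witness, and Frobenius reciprocity---which is broadly the strategy of~\cite{CP18} itself. Your achievability computation is essentially sound: the success probability of your covariant measurement is $\lambda_{\max}(B)$ with $B=P_{S_H}P_WP_{S_H}$, and the chain $\lambda_{\max}(B)\ge\Tr(B^2)/\Tr(B)\ge\Tr(B)/\dim Y=\dim W/\dim Y^{\uparrow}$ is correct. You rightly flag the optimality direction as the crux and do not complete it; exhibiting a feasible $G$-invariant dual $Z$ with $\Tr Z=\max_Y\dim Y^{\uparrow}_{V^{\otimes t}}/\dim Y^{\uparrow}$ genuinely requires controlling the change of basis between the $G$- and $H$-isotypic decompositions, and your proposal leaves this as a plan rather than an argument. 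If you want to fill this in, the details are in~\cite{CP18}.
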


The \emph{oracle discrimination problem} is the special case of the coset identification problem where $H$ is the trivial group, i.e., the function $F$ is injective. In this case, $Y^\uparrow=\Span\{\ket{g}:g\in G\}$.

\begin{corollary}[Optimal success probability of oracle discrimination {\cite[Theorem 4.2]{CP18}}]
  The optimal success probability of the oracle discrimination problem is
  \begin{align}
    \frac{1}{|G|}\sum_{i\in I(V^{\otimes t})} d_i^2,
  \end{align}
  where $I(V^{\otimes t})$ is the irrep content of $(\pi^{\otimes t},V^{\otimes t})$ and $d_i$ is the dimension of irrep $i\in I(V^{\otimes t})$.
\end{corollary}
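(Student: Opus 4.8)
The plan is to derive this corollary as the special case of \thm{opt-cip} in which $H$ is the trivial subgroup $\{e\}$. First I would observe that when $H=\{e\}$ the only irreducible representation of $H$ is the one-dimensional trivial representation $Y=\mathbf{1}$, so the maximization over $Y$ in \eq{cip-max-pr} collapses to a single term. The induced representation $\mathbf{1}^{\uparrow}=\mathrm{Ind}_{\{e\}}^{G}\mathbf{1}$ is precisely the regular representation of $G$, whose representation space is $\Span\{\ket{g}:g\in G\}$ as already noted in the text just before the corollary; in particular $\dim Y^{\uparrow}=|G|$.

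Next I would invoke the standard decomposition of the regular representation, $\mathbf{1}^{\uparrow}\cong\bigoplus_{i}d_i\,\rho_i$, where $i$ ranges over all irreps $\rho_i$ of $G$ and each occurs with multiplicity equal to its dimension $d_i=\dim\rho_i$. By the definition of $A_B$ as the maximal subrepresentation of $A$ with $I(A_B)\subseteq I(B)$, the subrepresentation $Y^{\uparrow}_{V^{\otimes t}}$ is obtained from this decomposition by keeping exactly the isotypic components $d_i\,\rho_i$ whose character lies in $I(V^{\otimes t})$ and discarding the rest, i.e.\ $Y^{\uparrow}_{V^{\otimes t}}\cong\bigoplus_{i\in I(V^{\otimes t})}d_i\,\rho_i$. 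Counting dimensions gives $\dim Y^{\uparrow}_{V^{\otimes t}}=\sum_{i\in I(V^{\otimes t})}d_i\cdot d_i=\sum_{i\in I(V^{\otimes t})}d_i^2$. Substituting the two dimension counts into \eq{cip-max-pr} then yields
\[
  \Pr[\A^{\pi(g)}=F(g)]=\frac{\dim Y^{\uparrow}_{V^{\otimes t}}}{\dim Y^{\uparrow}}=\frac{1}{|G|}\sum_{i\in I(V^{\otimes t})}d_i^2,
\]
which is the claimed formula.

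The only point I expect to require any care is the identification $\mathbf{1}^{\uparrow}_{V^{\otimes t}}\cong\bigoplus_{i\in I(V^{\otimes t})}d_i\,\rho_i$: one must check that the operation $A\mapsto A_B$ respects the isotypic decomposition in the obvious way, namely that the maximal subrepresentation with irrep content contained in $I(B)$ retains the \emph{full} multiplicity $d_i$ of each retained irrep and none of the others (a subrepresentation cannot contain more copies of $\rho_i$ than appear in $A$, and taking the whole isotypic block is both admissible and maximal). This is routine representation theory, so once \thm{opt-cip} is available the corollary is essentially immediate.
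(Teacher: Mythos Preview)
Your proposal is correct and follows exactly the route the paper indicates: the paper does not spell out a proof but immediately before the corollary notes that for the trivial subgroup $Y^{\uparrow}=\Span\{\ket{g}:g\in G\}$, i.e.\ the regular representation, which is precisely the key observation you use. The remaining steps (decomposing the regular representation and reading off the dimension of the part supported on $I(V^{\otimes t})$) are the standard ones, so your derivation is essentially the intended one.
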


We consider the complexity of standard queries in the matrix-vector model. In this model, oracle access to a matrix $M\in\F^{m\times n}$ for field $\F$ and positive integers $m,n$ is the unitary operation $U(M)\colon\ket{x,y}\mapsto\ket{x,y+Mx}$.
The map $U$ is a representation of the additive group of matrices since it is a group homomorphism satisfying $U(M)U(N)=U(M+N)$ for all matrices $M,N$ of the same dimensions.
A phase query is also a unitary representation since it is a standard query conjugated by a fixed unitary matrix (the quantum Fourier transform).

\subsection{The polynomial method}

We will use the polynomial method to obtain quantum lower bounds. Here we state a version for non-boolean functions as used in \cite{Aar02}.

\begin{lemma}\label{lem:polynomial-2t}
  Let $\A$ be a $t$-query quantum algorithm given access to the input $x\in [m]^n$ for $m,n\in\mathbb{Z}$ through oracle $U_x\colon \ket{i,j}\mapsto\ket{i,j+x_i}$ for $i \in [n]$ and $j \in [m]$.
  The acceptance probability of $\A$ on input $x$ is a degree-$(2t)$ polynomial in $x_1,\ldots,x_n$.
\end{lemma}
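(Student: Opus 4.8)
The plan is to run the polynomial method of \cite{BBC+01} in the non-boolean form used in \cite{Aar02}. First I would fix conventions: identify $[m]$ with $\mathbb{Z}_m$ so that $U_x$ is a genuine permutation, and for each $i\in[n]$ and $k\in[m]$ introduce the $\{0,1\}$-valued variable $x_{i,k}\coloneqq[x_i=k]$. By ``degree-$(2t)$ polynomial in $x_1,\dots,x_n$'' I then mean a polynomial of total degree $\le 2t$ in the variables $\{x_{i,k}\}$, equivalently a sum $\sum_{S\subseteq[n],\,|S|\le 2t}P_S\!\big((x_i)_{i\in S}\big)$ of functions each depending on at most $2t$ of the coordinates; the passage between these two descriptions is part of the proof.

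The first substantive step is to record the matrix of the oracle on the index--output registers (tensored with the identity on the workspace register): its only nonzero entries are $\bra{i,j'}U_x\ket{i,j}=[\,j'=j+x_i\,]=[\,x_i=j'-j\,]=x_{i,\,j'-j}$, so every entry of $U_x$ is either $0$ or a single indicator variable, hence of degree $\le 1$. Then the usual induction applies: the initial state $\ket{\psi_0}$ is input-independent, so its amplitudes have degree $0$; an input-independent unitary takes complex linear combinations of amplitudes and does not raise the degree; and one application of $U_x$ replaces each amplitude by a linear combination of amplitudes with coefficients of degree $\le 1$ in the $x_{i,k}$, raising the degree by at most $1$. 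Therefore after $t$ queries every amplitude $\alpha_z(x)$ of the final state is a polynomial of degree $\le t$ in the $x_{i,k}$.

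Measuring in the computational basis, the acceptance probability is $\sum_{z\text{ accepting}}|\alpha_z(x)|^2$; since the $x_{i,k}$ are real, each $|\alpha_z(x)|^2=\alpha_z(x)\,\overline{\alpha_z(x)}$ is a real polynomial of degree $\le 2t$, and so is the sum. To reach the stated form, restrict to the domain $x\in[m]^n$ and multilinearize using $x_{i,k}^2=x_{i,k}$ and $x_{i,k}x_{i,k'}=0$ for $k\ne k'$; then every surviving monomial is $\prod_{i\in S}x_{i,k_i}$ with distinct indices $i\in S$ and $|S|\le 2t$, and grouping monomials by their index set $S$ yields exactly $\sum_{|S|\le 2t}P_S\!\big((x_i)_{i\in S}\big)$. (One could run the same argument with the phase oracle $\ket{i,j}\mapsto e(jx_i)\ket{i,j}$ over $\mathbb{Z}_m$, whose entries are characters $\chi(x_i)$, so that the final amplitudes are linear combinations of products of at most $t$ characters evaluated at distinct coordinates.)

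The argument is essentially bookkeeping, with the textbook path-counting of the polynomial method supplying all of the quantum content; I would not expect a genuine obstacle. The only points that need care are fixing the group structure so that $U_x$ is unitary and confirming that each of its nonzero entries is a single indicator variable (in particular treating the $\bmod\ m$ wraparound correctly), and pinning down the intended reading of ``degree-$(2t)$ polynomial in $x_1,\dots,x_n$'' together with the reduction from the $x_{i,k}$ back to the $x_i$.
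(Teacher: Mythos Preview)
The paper does not provide its own proof of this lemma; it is stated as a known fact from the polynomial-method literature with a citation to \cite{Aar02}. Your argument is the standard one and is correct, and your reading of ``degree-$(2t)$ polynomial in $x_1,\dots,x_n$'' as degree $\le 2t$ in the indicator variables $x_{i,k}=[x_i=k]$ matches exactly how the paper later uses the lemma in Section~\ref{sec:nullspace}, where the acceptance probability is expanded as $\sum_{S} c_S\prod_{(x,y)\in S}\delta_{xy}$ with $c_S=0$ for $|S|>2t$ and $\delta_{xy}=[Mx=y]$.
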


\section{Equivalence of matrix-vector and vector-matrix-vector products}\label{sec:equivalence}

In this section, we show that the matrix-vector and vector-matrix-vector models are equivalent, i.e., for any problem, the quantum query complexities in these models differ by at most a constant factor.
Furthermore, we show that in the matrix-vector model, left matrix-vector products and right matrix-vector products are equivalent.
This is in stark contrast to the classical case where these query complexities can differ significantly, as mentioned in \sec{intro} and discussed further below.

\subsection{Left and right matrix-vector queries}

We first show that left matrix-vector products and right matrix-vector products are equivalent.
\begin{theorem}\label{thm:left-right-eq}
Quantum query complexities in the left and right matrix-vector models over a finite field are identical. In particular, one right $\mv$ query can be simulated using one left $\mv$ query, and vice versa.
\end{theorem}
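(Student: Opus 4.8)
The plan is to exhibit an explicit simulation of a right $\mv$ query to $M$ by a left $\mv$ query to the transpose $M^\top$ (and vice versa), using the Fourier transform to convert between the two. Concretely, a right $\mv$ query to $M \in \Fq^{m \times n}$ is the standard query $U_R(M)\colon \ket{x,y} \mapsto \ket{x, y + Mx}$ with $x \in \Fq^n$, $y \in \Fq^m$. A left $\mv$ query to $M$ is the map $U_L(M)\colon \ket{u,v} \mapsto \ket{u, v + M^\top u}$ (writing the left product $u^\top M$ as the column vector $M^\top u$) with $u \in \Fq^m$, $v \in \Fq^n$. The key observation is that the phase version of a right $\mv$ query to $M$ acts as $\ket{x,y} \mapsto e(y^\top M x)\ket{x,y}$ after Fourier-transforming the $y$-register, and the bilinear form $y^\top M x = x^\top M^\top y$ is symmetric in the roles of ``input on the right'' and ``input on the left.'' So the same diagonal phase unitary is simultaneously the phase version of a right $\mv$ query to $M$ (with $x$ the query input, $y$ the Fourier-dual register) and the phase version of a left $\mv$ query to $M^\top$ (with $y$ the query input, $x$ the Fourier-dual register).

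First I would set up notation carefully and recall from \eq{phase} that, over $\Fq$, a standard $\mv$ query conjugated by the Fourier transform $F_{\Fq}^{\otimes m}$ on the output register $y$ implements the phase query $\ket{x,y}\mapsto e(y^\top M x)\ket{x,y}$, where $e$ is the additive character defined in \sec{query-model}. Next I would observe that this phase unitary is literally symmetric: write $\Phi_M \coloneqq \sum_{x,y} e(y^\top M x)\,\proj{x,y}$. Reading the first register as the query input and the second as ``to be Fourier-conjugated,'' $\Phi_M$ is the phase form of the right query $U_R(M)$; reading the second register as the query input and the first as ``to be Fourier-conjugated,'' and using $y^\top M x = x^\top(M^\top y)$, $\Phi_M$ is the phase form of the left query $U_L(M^\top)\colon\ket{y,x}\mapsto e\big(x^\top M^\top y\big)\ket{y,x}$, i.e. (in standard form) $\ket{y,x}\mapsto\ket{y,x+M^\top y}$. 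Therefore: given a left-$\mv$ oracle for the matrix $N \coloneqq M^\top$, conjugate its output register by $F_{\Fq}^{\otimes n}$ to get $\Phi_{N^\top} = \Phi_M$, then conjugate the first register by $F_{\Fq}^{\otimes n}$ back again (or, cleanly, just note $\Phi_M$ conjugated by $F_{\Fq}^{\otimes m}$ on the second register recovers the standard right query $U_R(M)$). Each direction costs exactly one oracle call plus Fourier transforms, which are input-independent and free in the query model. Hence any $t$-query algorithm in one model is converted to a $t$-query algorithm in the other with identical success probability, and the two query complexities coincide.

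The second direction (simulating a left query by a right query) is the mirror image of the same computation, so I would state it as a one-line corollary of the symmetry rather than redo it. I would also remark that ``the same problem'' should be phrased to make the reduction meaningful: a problem posed with right $\mv$ access to $M$ is the same as the problem posed with left $\mv$ access to $M^\top$, and the theorem says these have equal quantum query complexity; combined with the trivial bijection $M \leftrightarrow M^\top$ on inputs, this gives the stated equivalence of the left and right models. For the real case one would invoke the discretization remark from \sec{query-model} — replacing $F_\R$ by the DFT over $\Z_{2^s}$ — so the same conjugation argument goes through up to the chosen precision; but since the theorem is stated over a finite field, the clean version above suffices.

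**The main obstacle** is essentially bookkeeping rather than mathematical depth: one must be careful about (i) which register has which dimension ($\Fq^n$ vs.\ $\Fq^m$), so that the Fourier transforms are applied on registers of the correct size and the conjugations compose to the identity off the phase; (ii) the convention for ``left product'' ($u^\top M$ as a row vector versus $M^\top u$ as a column vector) so that the bilinear-form identity $y^\top M x = x^\top M^\top y$ is invoked with the right transposes; and (iii) confirming that $\Tr_{\Fq/\F_p}$ and the character $e(\cdot)$ behave well under this — but since $e$ is a fixed additive character and the identity $y^\top M x = x^\top M^\top y$ holds over any commutative ring, there is no subtlety there. Once the conventions are pinned down, the proof is a two-line Fourier-conjugation argument.
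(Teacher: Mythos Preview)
Your proposal is correct and follows essentially the same argument as the paper: conjugate the standard $\mv$ query by the Fourier transform on the output register to obtain the phase query $\ket{x,y}\mapsto e(y^\top M x)\ket{x,y}$, exploit the symmetry $y^\top M x = x^\top M^\top y$ (the paper phrases this as conjugating by a \textsc{swap} gate, which is exactly your ``read the registers in the opposite roles''), and Fourier-conjugate back to recover $U^{\mv}(M^\top)$. The only difference is presentational---the paper writes out the intermediate states explicitly and names the swap step, whereas you describe the diagonal unitary $\Phi_M$ abstractly---but the content and query count are identical.
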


\begin{proof}
For input matrix $M\in\Fq^{n\times m}$, a matrix-vector ($\mv$) query applies the unitary transformation
\begin{align}
  U^{\mv}(M)\colon \ket{x,y}\mapsto\ket{x,y+Mx}
\end{align}
for every $x\in\Fq^m$ and $y\in\Fq^n$.
Conjugating by a quantum Fourier transform on the output register yields a phase query
\begin{align}\nonumber
  \ket{x,y}
  &\xmapsto{\mathbb{1}\otimes F_{\F_q^n}^\dag} q^{-1/2}\sum_z e(-y^\top\! z) \ket{x,z} \\\nonumber
  &\xmapsto{U^{\mv}(M)} q^{-1/2}\sum_z e(-y^\top\! z) \ket{x,z+Mx} \\\nonumber
  &\xmapsto{\mathbb{1}\otimes F_{\F_q^n}} q^{-1}\sum_{z,w} e(-y^\top\! z+w^\top(z+Mx)) \ket{x,w} \\\nonumber
  &= \sum_{w} \delta[y=w]e(-y^\top\! z+w^\top(z+Mx)) \ket{x,w} \\
  &= e(y^\top\! Mx) \ket{x,y}. \label{eq:mvphase}
\end{align}
We denote this unitary transformation by $U^{\mvphase}(M)$.

Conjugating a phase query by a swap gate, we have
\begin{align}\nonumber
  \ket{x,y}
  &\xmapsto{\swap} \ket{y,x} \\\nonumber
  &\xmapsto{U^{\mvphase}(M)} e(x^\top\! M y)\ket{y,x} \\\nonumber
  &\xmapsto{\swap} e(x^\top\! M y)\ket{x,y} \\\label{eq:swap}
  &= e(y^\top\! M^\top\! x)\ket{x,y}.
\end{align}
This yields $U^{\mvphase}(M^\top)$, which in turn gives $U^{\mv}(M^\top)$ upon conjugation by an inverse quantum Fourier transform on the output register.
Thus one can simulate the oracle $U^{\mv}(M^\top)$ using one query to $U^{\mv}(M)$, showing equivalence of the two models.
\end{proof}

In contrast to \thm{left-right-eq}, Sun, Woodruff, Yang, and Zhang show that for the task of computing the row parities of an $m\times n$ matrix $M$ over $\mathbb{F}_2$, the left query complexity is $\Omega(m)$, whereas the right query complexity is 1 \cite{SWYZ19}.
Thus we have shown that computing column parities over $\mathbb{F}_2$ in the $\mv$ model has quantum query complexity 1, significantly less than the classical query complexity of $\Omega(n)$.

\begin{corollary}\label{cor:row-parities}
  The query complexity of computing the row parities and the column parities of an $m\times n$ matrix over $\mathbb{F}_2$ is $1$.
\end{corollary}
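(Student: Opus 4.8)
The plan is to reduce both tasks to a single right $\mv$ query, invoking \thm{left-right-eq} to handle the side that is classically hard. First I would recall the classical observation of \cite{SWYZ19}: over $\F_2$, a single right query $M\mathbbm{1}$ with $\mathbbm{1}$ the all-ones vector in $\F_2^n$ returns the vector whose $i$-th entry is $\sum_{j} M_{ij} = \bigoplus_j M_{ij}$, i.e.\ exactly the parity of row $i$. Since the quantum model subsumes the classical one (a classical query is a standard query on a computational-basis state followed by measurement), this immediately gives a $1$-query quantum algorithm for row parities: prepare $\ket{\mathbbm{1},0}$, apply $U^{\mv}(M)$, and measure the output register to read off all $m$ row parities at once.

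For column parities, the classical right $\mv$ model is weak --- it costs $\Omega(n)$ queries, since each right query only probes $M$ against one vector and column information is not linearly accessible that way --- but the \emph{left} $\mv$ model makes it trivial: the single left query $\mathbbm{1}^\top M$ returns the vector whose $j$-th entry is $\bigoplus_i M_{ij}$, the parity of column $j$. By \thm{left-right-eq}, one left $\mv$ query can be simulated by one right $\mv$ query (conjugate the phase query by $\swap$ and by quantum Fourier transforms, as in the proof of that theorem, which over $\F_2$ specialize to Hadamards). Hence column parities also have quantum query complexity $1$. Matching lower bounds are immediate: with zero queries the algorithm's output is independent of $M$, so it cannot determine even one parity bit correctly with probability better than $1/2$; therefore at least one query is necessary, and the complexity is exactly $1$ in both cases.

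The only point requiring a little care is to make sure the reduction from left to right queries in \thm{left-right-eq} is applied in the correct direction --- that theorem is symmetric, so simulating $U^{\mv}(M^\top)$ by a query to $U^{\mv}(M)$ is precisely what converts a ``left multiplication by $\mathbbm{1}^\top$'' subroutine into one using the given right $\mv$ oracle. Beyond that, everything is routine; I expect no genuine obstacle. One could also phrase the argument entirely within the phase-query picture of \eq{mvphase}: the state $\ket{\mathbbm{1}}$ in the input register and the appropriate character in the output register extract $\mathbbm{1}^\top M x$ as a phase for all $x$ simultaneously, but the computational-basis description above is cleaner and suffices.
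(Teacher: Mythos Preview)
Your proposal is correct and follows essentially the same route as the paper: the paper notes (citing \cite{SWYZ19}) that a single right $\mv$ query $M\mathbbm{1}$ computes all row parities, and then invokes \thm{left-right-eq} to conclude that column parities---trivially computable with one left query $\mathbbm{1}^\top M$---also have quantum query complexity $1$. Your additional remarks on the trivial lower bound and the phase-query reformulation are fine but not needed; the paper states the corollary without a separate proof beyond the preceding discussion.
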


Note that it is easy to understand the randomized query complexities of these problems in the $\vmv$ model.

\begin{lemma}\label{lem:vmv-parities}
The randomized query complexities of computing the row parities and the column parities of an $m\times n$ matrix over $\mathbb{F}_2$ are $\Theta(m)$ and $\Theta(n)$, respectively.
\end{lemma}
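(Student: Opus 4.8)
The plan is to establish matching upper and lower bounds, after first reducing everything to the row-parity case. Observe that a $\vmv$ query to $M$ on input vectors $(x,y)$ returns $x^{\top}My=y^{\top}M^{\top}x$, which is exactly a $\vmv$ query to $M^{\top}$ on $(y,x)$; since the column parities of $M$ are the row parities of the $n\times m$ matrix $M^{\top}$, the claimed $\Theta(n)$ bound for column parities will follow from the $\Theta(m)$ bound for row parities applied to $M^{\top}$. For the upper bound on row parities, note that the parity of the $i$-th row of $M\in\F_2^{m\times n}$ equals $e_i^{\top}M\mathbf{1}$, where $\mathbf{1}$ is the all-ones vector; hence the $m$ queries $(e_1,\mathbf{1}),\dots,(e_m,\mathbf{1})$ reveal all row parities deterministically, giving an $O(m)$ upper bound (and likewise $O(n)$ for columns).

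For the $\Omega(m)$ lower bound I would use Yao's principle against the uniform distribution on $M\in\F_2^{m\times n}$. The key observation is that the vector of row parities is precisely $M\mathbf{1}$, which is uniformly distributed over $\F_2^{m}$ when $M$ is uniform, so the algorithm is forced to recover $m$ uniformly random bits. Fixing the internal randomness of a bounded-error $t$-query randomized algorithm yields a deterministic adaptive algorithm whose entire transcript $T$---the ordered tuple of query answers---lies in $\F_2^{t}$, because each $\vmv$ query returns a single element of $\F_2$; the output $\hat R$ is then a function of $T$. Writing $R\coloneqq M\mathbf{1}$ and $p_e\coloneqq\Pr[\hat R\neq R]\le 1/3$, and combining $H(R)=m$ and $I(R;T)\le H(T)\le t$ with Fano's inequality, we obtain
\[
  m - t \;\le\; H(R\mid T) \;\le\; H_2(p_e) + p_e\log_2(2^m-1) \;<\; 1 + \tfrac{m}{3},
\]
where $H_2$ is the binary entropy function; hence $t > \tfrac{2}{3}m - 1 = \Omega(m)$. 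Since every bounded-error randomized algorithm gives rise to such a deterministic algorithm by fixing its coins appropriately, this lower bound applies to the randomized query complexity as well. Together with the upper bound, the row-parity complexity is $\Theta(m)$, and by the reduction above the column-parity complexity is $\Theta(n)$.

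The argument is essentially routine, so I do not expect a genuine obstacle. The only points requiring care are that an adaptive algorithm's transcript still carries at most $t$ bits of information---which is exactly what powers the counting/entropy bound---and that one should pass from worst-case randomized complexity to average-case deterministic complexity (via Yao) before invoking Fano; everything else is a short calculation.
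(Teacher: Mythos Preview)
Your proposal is correct and follows essentially the same approach as the paper: the upper bound via the queries $(e_i,\mathbf{1})$ and the row/column reduction via the symmetry $x^\top M y = y^\top M^\top x$ are identical to the paper's argument. For the lower bound the paper simply observes that each $\vmv$ query reveals one bit while the row parities carry $m$ bits; your use of Yao together with Fano's inequality is a clean formalization of exactly this information-theoretic count, so the two proofs coincide in substance.
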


\begin{proof}
  Each query reveals one bit of information, while the row parities convey $m$ bits, giving a lower bound of $\Omega(m)$.
  An algorithm querying $(e_1,1^n),\ldots, (e_m,1^n)$ learns the row parities with probability 1, giving an upper bound of $m$.
  The query complexity of column parities follows immediately from the symmetry of the $\vmv$ oracle.
\end{proof}

The randomized query complexities of determining if there exist identical columns or identical rows are $\Theta(n/m)$ and $\Theta(\log m)$, respectively \cite{SWYZ19}.
\thm{left-right-eq} implies that for identical columns, there is an exponential quantum speedup.

\begin{corollary}\label{cor:identical-rows}
  The query complexities of deciding if there exist two identical columns and rows in a $m\times n$ matrix over $\F_2$ are $O(\log n)$ and $O(\log m)$, respectively.
\end{corollary}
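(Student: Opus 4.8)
I would treat the two cases asymmetrically, exploiting that over $\F_2$ the ``identical rows'' problem already has a cheap classical randomized $\mv$ algorithm (fingerprinting by random linear sketches), and that \thm{left-right-eq} lets us trade a right $\mv$ query for a left one (equivalently, an $\mv$ query to $M^\top$ for one to $M$) at unit cost quantumly. The rows bound is then essentially a restatement of the known classical result, while the columns bound is the genuinely new part and follows by running the rows algorithm on $M^\top$.

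\textbf{Identical rows.} First I would recall (or re-derive) the $O(\log m)$ classical randomized $\mv$ algorithm of \cite{SWYZ19}. Write $M_i\in\F_2^n$ for the $i$th row of $M$. Sample $x^{(1)},\dots,x^{(t)}\in\F_2^n$ uniformly and independently, query $Mx^{(1)},\dots,Mx^{(t)}$, and assign to each row $i$ the sketch $s_i\coloneqq\bigl((Mx^{(1)})_i,\dots,(Mx^{(t)})_i\bigr)\in\F_2^t$. If $M_i=M_j$ then $s_i=s_j$ with certainty; if $M_i\neq M_j$ then $M_i\oplus M_j$ is a nonzero vector, so each coordinate agrees with probability $\Pr[(M_i\oplus M_j)^\top x^{(\ell)}=0]=\tfrac12$ independently over $\ell$, giving $\Pr[s_i=s_j]=2^{-t}$. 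A union bound over the fewer than $m^2/2$ pairs shows that with $t=\lceil 2\log_2 m\rceil+1=O(\log m)$ queries no pair of distinct rows collides except with probability at most $1/3$. The algorithm accepts iff some $s_i=s_j$ with $i\neq j$, which is correct with one-sided error. Since a classical randomized algorithm is in particular a quantum algorithm, the quantum $\mv$ query complexity of identical rows is $O(\log m)$.

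\textbf{Identical columns.} The columns of $M\in\F_2^{m\times n}$ are exactly the rows of $M^\top\in\F_2^{n\times m}$, so the routine above applied to $M^\top$ decides whether $M$ has two identical columns using $O(\log n)$ queries to the oracle $U^{\mv}(M^\top)$ (note the count is $O(\log n)$, not $O(\log m)$, since $M^\top$ has $n$ rows). By \thm{left-right-eq}, each such query is simulated exactly by one query to $U^{\mv}(M)$ conjugated by fixed, input-independent unitaries (Fourier transforms on the output register around a $\swap$). Composing yields a bounded-error quantum algorithm for identical columns making $O(\log n)$ $\mv$ queries to $M$, as claimed.

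\textbf{Main obstacle.} There is little obstacle beyond bookkeeping: the rows routine is the standard fingerprinting argument, and the only quantum ingredient is invoking \thm{left-right-eq} to push the sketching queries onto the transpose side for the columns case. The points that warrant a sentence of care are (i) that the one-sided error survives the simulation in \thm{left-right-eq}, which it does because that simulation is exact, and (ii) that one should quote the $O(\log n)$ query count from the $n$ rows of $M^\top$ rather than inadvertently writing $O(\log m)$.
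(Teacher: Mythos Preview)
Your proposal is correct and follows essentially the same approach as the paper: both prove the rows case by the standard random linear fingerprinting algorithm of \cite{SWYZ19} (sample $O(\log m)$ uniform vectors, compare the resulting sketches, union bound over pairs), and both obtain the columns case by invoking \thm{left-right-eq} to run that same routine on $M^\top$. Your write-up is slightly more explicit about separating the two cases and about the exact-simulation point, but there is no substantive difference.
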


\begin{proof}
  By \thm{left-right-eq}, it suffices to give an algorithm for determining if there are two identical rows.
  To make the proof self-contained, we describe the algorithm of Sun, Woodruff, Yang, and Zhang \cite[Section~4.2]{SWYZ19}.
  The algorithm makes $q$ random queries $v_1,\ldots,v_q$, the entries of which are sampled uniformly.
  The algorithm outputs 1 if and only if there exist two entries $i,j$ such that $(Mv_k)_i=(Mv_k)_j$ for $k\in[q]$.

  To analyze the performance, for any two identical rows $m_i^\top,m_j^\top$, $\Pr_v[m_i^\top\! v=m_j^\top\! v]=1$. For $m_i\neq m_j$, $\Pr_v[m_i^\top\! v=m_j^\top\! v]\leq 1/2$.
  Therefore for a matrix that has two identical rows, the algorithm outputs 1 with probability 1.
  On the other hand, for a matrix that has no identical rows, the algorithm outputs 1 with probability
  \begin{align}\nonumber
    \Pr_{v_1,\ldots,v_q}[\exists i,j\in[m],~\forall \ell\in[q], m_i^\top\! v_\ell=m_j^\top\! v_\ell]
    &\leq
    \sum_{i,j\in[m],i\neq j}\Pr_{v_1,\ldots,v_q}[\forall \ell\in[q], m_i^\top\! v_\ell=m_j^\top\! v_\ell] \\
    &\leq
    \binom{m}{2} 2^{-q}.
  \end{align}
  Taking $q=2\log m$, the probability is no more than $\frac{1}{2}-\frac{1}{2m}$.
\end{proof}

The equivalence of left and right queries also holds over the reals.

\begin{theorem}\label{thm:left-right-eq-reals}
  Quantum query complexities in the left and the right matrix-vector models over $\R$ are identical.
  In particular, one right $\mv$ query can be simulated using one left $\mv$ query, and vice versa.
\end{theorem}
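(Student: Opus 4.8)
The plan is to mirror the proof of \thm{left-right-eq} almost verbatim, replacing the finite-field Fourier transform $F_{\Fq^n}$ with the real Fourier transform $F_\R$ (acting coordinatewise, i.e.\ $F_{\R^n} = F_\R^{\otimes n}$), and replacing the additive character $e(\cdot)$ of $\Fq$ with the character $x \mapsto e^{2\pi i x}$ of $\R$. First I would recall that over $\R$ an $\mv$ query to $M \in \R^{n \times m}$ is the unitary $U^{\mv}(M)\colon \ket{x,y} \mapsto \ket{x, y + Mx}$ on $L^2(\R^m) \otimes L^2(\R^n)$ (with the Dirac-normalized continuous basis introduced in \sec{query-model}).

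Next I would conjugate by $\1 \otimes F_{\R^n}^\dag$ on the output register to obtain a phase query. The computation is the continuous analogue of \eq{mvphase}: starting from $\ket{x,y}$, applying $\1 \otimes F_{\R^n}^\dag$ gives $\int_{\R^n} \d z \, e^{-2\pi i\, y^\top z}\ket{x,z}$; applying $U^{\mv}(M)$ shifts $z \mapsto z + Mx$; and applying $\1 \otimes F_{\R^n}$ and integrating out the resulting delta function $\delta(y-w)$ (using $\int_{\R^n}\d z\, e^{2\pi i (w-y)^\top z} = \delta(y-w)$, i.e.\ $n$ copies of the one-dimensional identity from \sec{query-model}) yields the phase query $U^{\mvphase}(M)\colon \ket{x,y}\mapsto e^{2\pi i\, y^\top M x}\ket{x,y}$. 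Then, exactly as in \eq{swap}, conjugating $U^{\mvphase}(M)$ by the \swap\ gate on the two registers produces $\ket{x,y}\mapsto e^{2\pi i\, x^\top M y}\ket{x,y} = e^{2\pi i\, y^\top M^\top x}\ket{x,y} = U^{\mvphase}(M^\top)$, and undoing the output-register Fourier conjugation gives $U^{\mv}(M^\top)$. Hence one left (equivalently, right) $\mv$ query over $\R$ simulates one query of the opposite handedness, and the query complexities coincide.

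The only real obstacle—and it is minor—is justifying the formal manipulations with the continuous (Dirac-normalized) basis: the interchange of integrals and the use of $\int_\R \d z\, e^{2\pi i z (x-x')} = \delta(x-x')$. This is exactly the level of rigor already adopted in \sec{query-model} when verifying $F_\R^\dag F_\R = \1$, so I would simply appeal to those same conventions. For a fully rigorous treatment one invokes the discretization remark at the end of \sec{query-model}: replace $\R$ by $\mathbb{Z}_{2^s}$ and $F_\R$ by the discrete Fourier transform over $\mathbb{Z}_{2^s}$, run the identical argument (which is now literally a special case of \thm{left-right-eq} with $q = 2^s$, except that $\mathbb{Z}_{2^s}$ is a ring rather than a field—but the argument only uses the additive group structure and the orthogonality of characters), and note that the simulation is exact at every precision $s$. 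Either way the conclusion is the stated equivalence, with a one-query overhead in each direction.
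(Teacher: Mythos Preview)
Your proposal is correct and follows essentially the same approach as the paper: conjugate the standard $\mv$ query by the (real) Fourier transform on the output register to obtain the phase query, conjugate by \swap\ to pass to $M^\top$, and undo the Fourier conjugation. The paper's own proof is just a terse pointer back to \thm{left-right-eq} and \eq{swap}, so your write-up is in fact more detailed than what appears there; your remarks on the Dirac-delta manipulations and the discretization to $\mathbb{Z}_{2^s}$ (using only the additive-group/character structure) are apt and go slightly beyond the paper's level of justification.
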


\begin{proof}
  The same idea as in the proof of \thm{left-right-eq} applies. 
  First, a phase query can be simulated by conjugating a standard query by the quantum Fourier transform.
  This yields $U^{\widetilde{\mv}}(M)$.
  Conjugating a phase query by a swap gate gives $U^{\widetilde{\mv}}(M^\top)$ with the same calculation as in \eq{swap}.
  This in turn yields $U^{\mv}(M^\top)$ upon conjugating $U^{\widetilde{\mv}}(M^\top)$ by an inverse quantum Fourier transform.
\end{proof}

Note that with finite precision, a phase query can be simulated using the quantum Fourier transform over an integer modulus (see \sec{query-model} for details).

As an example, we determine the query complexity of the majority of rows or columns: given a binary matrix $M\in\{0,1\}^{m\times n}$, compute the majority of each row or column over $\mathbb{R}$.

\begin{corollary}\label{cor:majority}
  The query complexities of computing the majorities of rows and columns of an $m \times n$ matrix over $\mathbb R$ are $1$.
\end{corollary}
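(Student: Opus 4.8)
The plan is to reduce each of the two tasks to a single matrix-vector product against an all-ones vector, using \thm{left-right-eq-reals} to reach whichever side of $M$ is not directly accessible.

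\textbf{Row majorities.} Query the oracle $U^{\mv}(M)$ on the all-ones input $x=(1,\dots,1)^\top\in\R^n$. The output $Mx$ has $i$-th entry equal to the row sum $\sum_{j=1}^n M_{ij}$, and since $M$ is binary the majority of row $i$ is completely determined by this sum (it equals $1$ exactly when the sum exceeds $n/2$). Hence a single query followed by classical post-processing outputs all row majorities. \textbf{Column majorities.} The column sums are the entries of $M^\top(1,\dots,1)^\top\in\R^m$, i.e., the output of one query to $U^{\mv}(M^\top)$; by \thm{left-right-eq-reals} this query is simulated by a single query to $U^{\mv}(M)$, after which post-processing again recovers every column majority. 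For the matching lower bound, note the output depends on $M$, so at least one query is necessary.

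For the finite-precision instantiation, observe that all row and column sums are integers in $\{0,\dots,n\}$ and $\{0,\dots,m\}$ respectively, so $s=O(\log(mn))$ bits of precision suffice; the discretized queries and the discretized Fourier transforms used in the simulation of \thm{left-right-eq-reals} then introduce no error, so the bound of $1$ query holds in the discretized model as well. I do not anticipate any real obstacle here: the only point requiring care is confirming that simulating $U^{\mv}(M^\top)$ by $U^{\mv}(M)$ over $\R$ costs exactly one query, which is precisely what \thm{left-right-eq-reals} provides.
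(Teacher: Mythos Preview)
Your proposal is correct and follows essentially the same approach as the paper: query the all-ones vector to obtain the row sums (hence row majorities), and invoke \thm{left-right-eq-reals} to handle the column case with a single query as well. Your added remarks on the trivial lower bound and on finite precision also match the paper's treatment.
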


\begin{proof}
  By \thm{left-right-eq-reals}, it suffices to show the query complexity of the majority of rows is 1.
  With a single query $(1,1,\ldots,1)^\top$, the majority of each row is determined.
\end{proof}

This result is not significantly affected by considering computation with finite precision. The number of 1s in each row and each column is an integer in $[0,k]$ for $k=\max\{m,n\}$. Thus a truncation with $O(\log k)$ bits suffices to perform the computation with no error.

\subsection{The vector-matrix-vector model}
\label{sec:vmv}

We now relate the power of the matrix-vector and vector-matrix-vector query models.
In the vector-matrix-vector model, the algorithm is given access to $M$ via $U^{\vmv}\colon\ket{x,y,a}\mapsto\ket{x,y,a+y^\top\! M x}$.
We can simulate one $\vmv$ query using two $\mv$ queries and an ancilla space storing a matrix-vector product:
\begin{align}\nonumber
  \ket{x,y,a}
  &\xmapsto{U^{\mv}(M)} \ket{x,y,a}\ket{Mx} \\\nonumber
  &\xmapsto{} \ket{x,y,a+y^\top\! Mx}\ket{Mx} \\
  &\xmapsto{U^{\mv}(M)^\dag} \ket{x,y,a+y^\top\! Mx}\ket{0}.
\end{align}

On the other hand, an $\mv$ phase query (defined previously in \eq{mvphase}) can be simulated using a $\vmv$ phase query by setting $a=1$:
\begin{align}
  \ket{x,y,1} \xmapsto{} e(y^\top\! M x)\ket{x,y,1}.
\end{align}
Such a $\vmv$ phase query can be constructed using one application of $U^{\vmv}$:
\begin{align}\nonumber
  \ket{x,y,a}
  &\xmapsto{\mathbb{1}\otimes\mathbb{1}\otimes F_{\F_q}^\dag} \sum_be(-ab)\ket{x,y,b} \\\nonumber
  &\xmapsto{U^{\vmv}(M)} \sum_be(-ab)\ket{x,y,b+y^\top\! Mx} \\\nonumber
  &\xmapsto{\mathbb{1}\otimes\mathbb{1}\otimes F_{\F_q}} \sum_{bc} e(-ab+c(b+y^\top\! Mx))\ket{x,y,c} \\
  &= e(ay^\top\! Mx)\ket{x,y,a}.
\end{align}

Thus we have shown the following.

\begin{theorem}\label{thm:vmv-eq}
Quantum query complexities in the matrix-vector and vector-matrix-vector models differ by at most a constant factor.
In particular, one $\vmv$ query can be simulated using two $\mv$ queries, and one $\mv$ query can be simulated using one $\vmv$ query.
\end{theorem}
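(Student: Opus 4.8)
The plan is to establish the equivalence by exhibiting explicit constant-overhead simulations in both directions. For the easy direction, I would simulate a single $\vmv$ query by two $\mv$ queries using an ancillary register. Starting from $\ket{x,y,a}$ together with a fresh ancilla $\ket{0}$, apply $U^{\mv}(M)$ to the $x$-register and ancilla to produce $\ket{x,y,a}\ket{Mx}$; then, using only the $y$-register and the ancilla and no further oracle call, add the inner product $y^\top(Mx)$ into the $a$-register, since computing the inner product of two explicitly stored vectors and adding it into a register is a query-free reversible operation; finally apply $U^{\mv}(M)^\dagger$ to the $x$-register and ancilla to erase $Mx$. This leaves $\ket{x,y,a+y^\top M x}\ket{0}$, which is exactly $U^{\vmv}(M)$ on the original registers, at the cost of two $\mv$ queries.

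For the reverse direction---simulating one $\mv$ query by one $\vmv$ query---I would pass to the phase-query picture, which is legitimate because standard and phase queries are interconvertible at unit cost (established in \sec{query-model}), and the $\mv$ phase query $U^{\mvphase}(M)\colon\ket{x,y}\mapsto e(y^\top M x)\ket{x,y}$ was already derived in \eq{mvphase}. The key observation is that a $\vmv$ phase query---obtained from $U^{\vmv}(M)$ by conjugating the $a$-register with the Fourier transform $F_{\Fq}$, by the same computation as in the $\mv$ case---acts as $\ket{x,y,a}\mapsto e(a\,y^\top M x)\ket{x,y,a}$. Specializing $a=1$ gives precisely the $\mv$ phase query $U^{\mvphase}(M)$ on $\ket{x,y}$, with $a$ carried along as a constant ancilla. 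Converting back from the phase picture to the standard picture then yields a simulation of $U^{\mv}(M)$ using one call to $U^{\vmv}(M)$.

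Combining the two simulations, any $t$-query algorithm in one model becomes an $O(t)$-query algorithm in the other, so the two quantum query complexities agree up to a constant factor. The only real subtlety is the reverse direction: one cannot directly emulate a standard $\mv$ query from a standard $\vmv$ query, and it is essential to route the argument through phase queries and the Fourier-conjugation trick on the output register---the same device used in the proof of \thm{left-right-eq}. Once that is in place, verifying that the phases multiply correctly (via the character orthogonality relation) is a routine computation identical in form to \eq{mvphase}.
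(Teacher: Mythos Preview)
Your proposal is correct and follows essentially the same approach as the paper: simulate $U^{\vmv}$ by computing $Mx$ into an ancilla, adding the inner product, and uncomputing (two $\mv$ queries), and simulate $U^{\mv}$ by forming the $\vmv$ phase query via Fourier conjugation on the $a$-register and then fixing $a=1$ to recover $U^{\mvphase}(M)$ (one $\vmv$ query). The paper's argument is organized identically, including the explicit verification that conjugating $U^{\vmv}(M)$ by $F_{\Fq}$ on the $a$-register yields $\ket{x,y,a}\mapsto e(a\,y^\top M x)\ket{x,y,a}$.
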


This is again in stark contrast to the classical case, where the $\mv$ model can be much more powerful than the $\vmv$ model.
For example, for distinguishing a full-rank matrix from a rank-$(n-1)$ matrix, the randomized query complexity in the $\vmv$ model is $\Omega(n^2)$ \cite{RWZ20}, while the randomized query complexity in the $\mv$ model is $O(n)$ \cite{SWYZ19}.

Note that Lee, Santha, and Zhang \cite{LSZ20} previously studied the equivalence between quantum $\mv$ and $\vmv$ oracles. They focus on the special case where the matrix $M$ is the adjacency matrix of a graph with nonnegative integer weights and the inputs $x \in \{0,1\}^n, y \in \{0,1\}^m$ are boolean. In that setting, they prove equivalence between the $\vmv$ oracle and the additive oracle $a\colon 2^{[n]}\to\mathbb{Z}$ that returns $a(S)=\sum_{(u,v)\in S^{(2)}}w(u,v)$ for $S\subseteq [n]$, where $S^{(2)}$ denotes the set of cardinality-$2$ subsets of $S$. They also study relationships with other oracles that encode specific information about graphs (cuts, disjoint cuts, etc.; see Section 4 of \cite{LSZ20}). In contrast, our \thm{left-right-eq}, \thm{left-right-eq-reals}, and \thm{vmv-eq} work for inputs and matrices in fields, and do not apply to other graph oracles. While these results are, strictly speaking, incomparable, they are closely related,  both following from a generalization of the Bernstein-Vazirani algorithm \cite{BV97}.

\section{Linear algebra over finite fields}\label{sec:linear-algebra}

We now consider the quantum query complexity of particular linear algebra problems in the matrix-vector query model. Specifically, we consider learning the trace (\sec{trace}), computing the null space and determinant (\sec{nullspace}), solving linear systems (\sec{regression}), and estimating the rank (\sec{rank}).

\subsection{Trace} \label{sec:trace}

In this section, we show that the quantum query complexity of computing the trace of an $n\times n$ matrix over $\mathbb{F}_q$ is $\Theta(n)$.
Since there is a trivial algorithm that computes the trace by learning the entire matrix using $n$ queries, we focus on the lower bound.

Learning the trace can be regarded as a coset identification problem (defined in \sec{coset}) in the group $G=\mathbb{F}_q^{n\times n}$ with subgroup $H=\{M\in\mathbb{F}_q^{n\times n}: \tr M=0\}\cong \mathbb{F}_q^{n^2-1}$.
The irreducible characters $\chi_Z$ of $H$ are indexed by $Z \in\mathbb{Z}_m^{n\times n}$ with $Z_{nn}=0$, and satisfy $\chi_Z(M) = e(\langle Z,M\rangle)$ where $\langle Z,M\rangle \coloneqq \sum_{i,j=1}^n Z_{ij}M_{ij}$.

\subsubsection{Learning the trace over $\mathbb{F}_2$}

First we consider the case $q=2$.
Then the irreducible characters $\chi_Z$ of $H$ for $Z \in \mathbb{Z}_m^{n \times n}$ (with $Z_{nn}=0$) satisfy
\begin{align}
  \chi_Z(M) = (-1)^{\langle Z, M\rangle}.
\end{align}
For irredicible character $Z$, the induced representation can be decomposed into two irreducible characters of $G$:
\begin{align}\label{eq:reps-M}
  \chi_{Z,0}(M) = (-1)^{\langle Z, M\rangle};\qquad
  \chi_{Z,1}(M) = (-1)^{\langle Z, M\rangle+\tr M}.
\end{align}
It is easy to check that for $M\in G$,
$\chi_{Z,0}(M+E_{nn})=\chi_{Z,0}(M)$
and $\chi_{Z,1}(M+E_{nn})=-\chi_{Z,1}(M)$, where $E_{ij}$ is an $n\times n$ matrix whose entries are zero except that $(E_{ij})_{ij}=1$.
We emphasize that in \eq{reps-M}, $M\in G$ (rather than in $H$ since we are now looking at the representations of the entire group), and $Z_{nn}=0$.

On the other hand, recall that the phase query oracle is $U(M)\colon \ket{x,y}\mapsto (-1)^{y^\top\! Mx}\ket{x,y}$, which is a unitary representation of $M$ with character $\xi(M)\coloneqq\tr(U(M))=\sum_{x,y\in\F_2^n}(-1)^{y^\top\! Mx}$.
To determine the optimal success probability, we calculate the irrep content of $U^{\otimes t}$.
The character of $U^{\otimes t}$ is $\xi^t$, satisfying
\begin{align}\nonumber
  \tr(U^{\otimes t}(M))
  &= \tr(U(M))^t = (\xi(M))^t\\
  &= \left(\sum_{x,y\in\F_2^n}(-1)^{y^\top\! Mx}\right)^t=\sum_{x_1,\ldots,x_t,y_1,\ldots,y_t\in\mathbb{F}_2^{n}}(-1)^{\sum_i y_i M x_i}.
\end{align}
Thus it has non-zero Fourier coefficient at $W$ if and only if $W\in R_t$, where $R_t$ is the set of matrices of rank no more than $t$.

We now check containment of the irreps \eq{reps-M} in $U^{\otimes t}$. We find
\begin{align}\label{eq:z-rt}
  m_{Z,0}^{(t)}=\langle \xi^t, \chi_{Z,0}\rangle >0 &\iff Z\in R_t, &
  m_{Z,1}^{(t)}=\langle \xi^t, \chi_{Z,0}\rangle >0 &\iff Z+\1_n\in R_t.
\end{align}
By \thm{opt-cip}, to succeed with probability better than $1/2$, we must choose a $Z$ such that both $m_{Z,0}^{(t)}>0$ and $m_{Z,1}^{(t)}>0$.
However, now we show this is impossible with $t<n/2$.

\begin{lemma}\label{lem:set-f2}
  The set $\{Z:m_{Z,0}^{(t)}>0\wedge m_{Z,1}^{(t)}>0\}$ is empty for $t<n/2$.
\end{lemma}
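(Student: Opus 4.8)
The plan is to combine the characterization \eq{z-rt} just established with the subadditivity of matrix rank. Recall from \eq{z-rt} that $m_{Z,0}^{(t)}>0$ holds precisely when $Z\in R_t$, and $m_{Z,1}^{(t)}>0$ holds precisely when $Z+\1_n\in R_t$, where $R_t$ denotes the set of $n\times n$ matrices over $\F_2$ of rank at most $t$. So the set in question is exactly $\{Z : \rank(Z)\le t \text{ and } \rank(Z+\1_n)\le t\}$, and the claim is that this set is empty once $t<n/2$.

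First I would assume toward a contradiction that some $Z$ lies in the set, so that $\rank(Z)\le t$ and $\rank(Z+\1_n)\le t$ simultaneously. Then I would write the identity matrix as a difference, $\1_n=(Z+\1_n)-Z$, and apply the general bound $\rank(A+B)\le\rank(A)+\rank(B)$ (valid over any field, and in particular over $\F_2$, where moreover $-Z=Z$) to obtain
\begin{align}
  n=\rank(\1_n)\le\rank(Z+\1_n)+\rank(Z)\le 2t.
\end{align}
This forces $t\ge n/2$, contradicting the hypothesis $t<n/2$; hence no such $Z$ exists and the set is empty.

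There is no real obstacle here: this is precisely the rank-subadditivity observation of Koiran, Nesme, and Portier \cite{KNP07} referenced in \sec{intro}, and essentially all the genuine work went into deriving the equivalences in \eq{z-rt}. The only point worth keeping track of is that the argument is carried out over $\F_2$, but since the rank inequality holds over an arbitrary field this causes no difficulty. Together with \thm{opt-cip}, the lemma shows that any $t$-query algorithm with $t<n/2$ must fail to pick a representation $Z$ with both $m_{Z,0}^{(t)}>0$ and $m_{Z,1}^{(t)}>0$, and hence succeeds with probability at most $1/2$, yielding the desired $\Omega(n)$ lower bound over $\F_2$.
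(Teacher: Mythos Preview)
Your proof is correct and essentially identical to the paper's: both invoke \eq{z-rt} to reduce to $\rank(Z)\le t$ and $\rank(Z+\1_n)\le t$, then use rank subadditivity on $\1_n=Z+(Z+\1_n)$ to force $t\ge n/2$. One minor remark: the rank-subadditivity step is elementary linear algebra and is not what \cite{KNP07} is cited for in this paper (that reference concerns polynomial-degree lower bounds used later in \sec{nullspace}), so you may want to drop that attribution.
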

\begin{proof}
  We show that the set is non-empty only if $t\geq n/2$.
  Suppose there exists $Z$ such that $m_{Z,0}>0$ and $m_{Z,1}>0$.
  By \eq{z-rt}, $Z\in R_t$ and $Z+\1_n\in R_t$.
  Since the ranks of $Z$ and $Z+\1_n$ are no more than $t$, we conclude that the rank of $\1_n=Z+Z+\1_n$ is no more than $2t$.
  Therefore $t\geq n/2$.
\end{proof}
This implies an $n/2$ lower bound, formally stated as follows.

\begin{lemma}\label{lem:binarytracelb}
  For $t<n/2$, any $t$-query quantum algorithm computing the trace of an $n\times n$ matrix over $\mathbb{F}_2$ succeeds with probability at most $1/2$.
\end{lemma}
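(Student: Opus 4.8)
The plan is to combine \lem{set-f2} with \thm{opt-cip} in essentially the way the surrounding text has already set up. The key observation is that the trace problem over $\F_2$ is precisely the coset identification problem $(\pi,V,F)$ with $G=\F_2^{n\times n}$, $H=\{M:\tr M=0\}$, $\pi(M)$ the phase-query unitary $U(M)$, $V$ its representation space, and $F(M)=\tr M$. For such a problem, the optimal success probability of a $t$-query algorithm under uniformly random input is given by \eq{cip-max-pr} as $\max_Y \dim Y^\uparrow_{V^{\otimes t}} / \dim Y^\uparrow$, where $Y$ ranges over irreps of $H$. Since $H\cong\F_2^{n^2-1}$ is abelian, its irreps are the one-dimensional characters $\chi_Z$, and each induced representation $\chi_Z^\uparrow$ has dimension $[G:H]=2$ and decomposes into the two characters $\chi_{Z,0},\chi_{Z,1}$ of $G$ exhibited in \eq{reps-M}.

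First I would unpack the right-hand side of \eq{cip-max-pr} for this case. The denominator $\dim\chi_Z^\uparrow$ equals $2$. The numerator $\dim(\chi_Z^\uparrow)_{V^{\otimes t}}$ is the dimension of the largest subrepresentation of $\chi_Z^\uparrow$ whose irreducible constituents all appear in $V^{\otimes t}=U^{\otimes t}$; since $\chi_Z^\uparrow = \chi_{Z,0}\oplus\chi_{Z,1}$, this numerator is $2$ if both $\chi_{Z,0}$ and $\chi_{Z,1}$ lie in $I(U^{\otimes t})$, is $1$ if exactly one does, and is $0$ otherwise. By \eq{z-rt}, $\chi_{Z,0}\in I(U^{\otimes t})$ iff $m^{(t)}_{Z,0}>0$ and likewise for $\chi_{Z,1}$. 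Hence the ratio is $1$ for some $Z$ iff the set in \lem{set-f2} is non-empty, and otherwise it is at most $1/2$. Since \lem{set-f2} shows the set is empty for $t<n/2$, the maximum in \eq{cip-max-pr} is at most $1/2$ for $t<n/2$, which is exactly the claim.

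The one point that requires a little care — and which I expect to be the main (minor) obstacle — is the reduction from an \emph{arbitrary} $t$-query quantum algorithm to the coset-identification framework of \thm{opt-cip}: that theorem is stated for algorithms querying the fixed unitary $\pi(g)$, and one must check that standard $\mv$ queries, phase $\mv$ queries, and more generally any query that is a unitary representation of $G$ all fall under its scope (the $\mv$ standard query $U(M)\colon\ket{x,y}\mapsto\ket{x,y+Mx}$ is a representation of the additive group because $U(M)U(N)=U(M+N)$, as noted in \sec{coset}, and the phase query is just a fixed conjugate of it, so WLOG we may analyze the phase query). The other subtlety is that \thm{opt-cip} bounds the success probability under the \emph{uniform} input distribution, whereas \lem{binarytracelb} is about worst-case algorithms; but an upper bound on average-case success probability is automatically an upper bound on worst-case success probability, so this direction is immediate. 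Putting these observations together yields the lemma with no further computation.
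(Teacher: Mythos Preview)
Your proposal is correct and follows essentially the same approach as the paper: both invoke \thm{opt-cip} together with \lem{set-f2}, observing that for the abelian group $H$ each irrep $\chi_Z$ induces a two-dimensional representation $\chi_{Z,0}\oplus\chi_{Z,1}$, so the ratio in \eq{cip-max-pr} is $\frac{1}{2}\sum_{b}\delta[m_{Z,b}^{(t)}>0]\le\frac{1}{2}$ when $t<n/2$. Your write-up is more explicit about the unpacking of the induced representation and about the average-to-worst-case direction, but the argument is the same.
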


\begin{proof}
  By \thm{opt-cip} and \lem{set-f2}, the optimal success probability for a  uniformly random matrix in $\mathbb{F}_2^{n\times n}$ is
  \begin{align}
    \frac{1}{2}\max_Z \sum_{b=0}^1 \delta[m_{Z,b}>0] \leq \frac{1}{2}
  \end{align}
  for $t<n/2$.
\end{proof}

On the upper bound side, we present an $\lceil n/2\rceil$-query quantum algorithm, showing that the above lower bound is achievable.

\begin{lemma}
  In the matrix-vector query model, there exists an $\lceil n/2\rceil$-query quantum algorithm that computes the trace of an $n \times n$ matrix over $\F_2$ with probability 1.
\end{lemma}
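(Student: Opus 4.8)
The plan is to exploit the fact that over $\F_2$ we only need to recover the single bit $\tr(M) = \sum_i M_{ii}$, and that a phase query to the matrix-vector oracle evaluates $(-1)^{y^\top M x}$ for vectors $x,y$ of our choosing. The key observation is that if I can choose $x$ and $y$ so that $y^\top M x = M_{i_1 i_1} + M_{i_2 i_2} + \dots$ picks out a set of diagonal entries (and nothing off-diagonal), then one phase query gives me the $\F_2$-sum of those diagonal entries exactly. Taking $x = y = e_i$ gives $y^\top M x = M_{ii}$, so $n$ such queries would recover the whole diagonal; the point is to do better by packing two diagonal entries into each query.

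First I would note that with $x = e_i + e_j$ and $y = e_i + e_j$ one gets $y^\top M x = M_{ii} + M_{ij} + M_{ji} + M_{jj}$, which is contaminated by the off-diagonal terms, so a naive pairing fails. The fix is to use two queries per pair of indices in a way that cancels the off-diagonal contamination: for instance, query with $(x,y) = (e_i, e_i + e_j)$ and $(x',y') = (e_j, e_i + e_j)$, whose phases are $M_{ii} + M_{ji}$ and $M_{ij} + M_{jj}$; but that still doesn't obviously cancel. The cleaner route, and the one I would pursue, is to use the left--right equivalence and the vector-matrix-vector phase query directly as the query primitive, and to observe that a \emph{single} $\vmv$ phase query with input pair $(x,y)$ returns $(-1)^{x^\top M y}$, a linear functional of the $n^2$ entries of $M$ over $\F_2$; so $t$ queries yield $t$ linear equations, and I need the target linear functional $\ell(M) = \sum_i M_{ii}$ to lie in the span of $t$ functionals each of the form $M \mapsto x^\top M y = \sum_{a,b} x_a y_b M_{ab}$, i.e. each supported on a rank-one ``grid'' $\{(a,b) : x_a = y_b = 1\}$. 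The diagonal is a union of $\lceil n/2 \rceil$ such grids when we allow cancellation: pair up coordinates $\{1,2\}, \{3,4\}, \ldots$; for the pair $\{2k-1, 2k\}$ take $x = y$ to be the indicator of $\{2k-1, 2k\}$, contributing $M_{2k-1,2k-1} + M_{2k-1,2k} + M_{2k,2k-1} + M_{2k,2k}$; then take one more ``global'' query or a second query per pair to strip the off-diagonal part. I expect the actual construction to use, for each pair, the three queries' worth of grids $\{2k-1,2k\}\times\{2k-1,2k\}$, $\{2k-1\}\times\{2k-1,2k\}$ adjusted — but in fact the slick version is: $M_{2k-1,2k-1}+M_{2k,2k} = (x^\top M x) + (e_{2k-1}^\top M e_{2k}) + (e_{2k}^\top M e_{2k-1})$ with $x$ the pair indicator, which is three $\vmv$ phase queries per pair, giving $3n/2$, not $n/2$. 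So the genuinely economical construction must be subtler; I would instead look for $\lceil n/2 \rceil$ vectors $v_1, \ldots, v_{\lceil n/2\rceil}$ and a recovery rule so that, querying $(v_k, v_k)$ and combining, the off-diagonal terms telescope globally. Concretely: query all pairs $(v_k, v_\ell)$? That is quadratically many.

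The resolution I would commit to: use the interpretation of a $\vmv$ phase query as evaluating the bilinear form, pick $x_1, \ldots, x_{\lceil n/2\rceil}$ and $y_1, \ldots, y_{\lceil n/2\rceil}$ so that the multiset of grids $\{(a,b): (x_k)_a = (y_k)_b = 1\}$ covers each diagonal cell $(i,i)$ an odd number of times and each off-diagonal cell $(i,j)$, $i\neq j$, an even number of times; then $\sum_k x_k^\top M y_k = \tr M$ over $\F_2$ and we are done in $\lceil n/2\rceil$ queries. Such a covering design exists: split $[n]$ into $\lceil n/2 \rceil$ pairs (plus a singleton if $n$ is odd) $P_1, \ldots, P_{\lceil n/2\rceil}$, and for query $k$ take $x_k = y_k = \mathbbm{1}_{P_k}$. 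Grid $k$ then covers the four cells $P_k \times P_k$: the two diagonal cells once each, and the two off-diagonal cells once each — but these off-diagonal cells appear in no other grid, so they are covered an odd (namely 1) number of times, which is wrong. So I would enlarge the design: additionally, for each pair $P_k = \{a,b\}$ issue a second query $x = e_a, y = e_b$ contributing the single cell $(a,b)$, and a third $x = e_b, y = e_a$ for $(b,a)$; total $3\lceil n/2\rceil$. To get down to $\lceil n/2 \rceil$ I must reuse grids across pairs, so I would instead choose the $P_k$ to be the $\lceil n/2\rceil$ blocks of a \emph{resolvable} pairing structure — equivalently, think of $[n]$ as a group and let the $P_k$ be cosets — arranged so every off-diagonal cell lies in exactly two blocks. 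The main obstacle, and the step I expect to be delicate, is precisely constructing this $\lceil n/2\rceil$-query covering (or, equivalently, exhibiting the diagonal functional as an explicit $\F_2$-combination of $\lceil n/2\rceil$ rank-one bilinear forms) and verifying the parity conditions; once that linear-algebra design is in hand, the quantum algorithm is just $\lceil n/2 \rceil$ parallel phase queries followed by summing the returned phases mod $2$, and correctness with probability $1$ is immediate from \eq{mvphase} and \thm{left-right-eq}.
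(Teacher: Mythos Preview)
Your approach has a genuine gap: it cannot succeed with fewer than $n$ queries. You are looking for fixed vectors $x_1,y_1,\ldots,x_\ell,y_\ell$ (with $\ell=\lceil n/2\rceil$) such that $\sum_{k} y_k^\top M x_k = \tr M$ for every $M\in\F_2^{n\times n}$. Since $y_k^\top M x_k=\langle y_kx_k^\top,M\rangle$, this identity is equivalent to $\sum_{k=1}^{\ell} y_k x_k^\top = \1_n$ over $\F_2$. But each summand has rank at most $1$ while $\rank(\1_n)=n$, so any such decomposition needs at least $n$ terms. Your repeated failures to find an $\lceil n/2\rceil$-grid covering are not accidents of the particular designs you tried; no such design exists. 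Equivalently, an algorithm that makes $\ell$ phase queries on a fixed (basis-state) input and classically combines the resulting bits is effectively a classical $\mv$ algorithm, and those need $n$ queries for the trace.

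The missing idea is to put the query \emph{inputs} in superposition rather than using fixed $(x_k,y_k)$. After padding to an even dimension $2\ell$, the paper prepares
\[
\ket{\psi_0}=\tfrac{1}{\sqrt2}\,\ket{e_1,\ldots,e_\ell}\ket{e_1,\ldots,e_\ell}+\tfrac{1}{\sqrt2}\,\ket{e_{\ell+1},\ldots,e_{2\ell}}\ket{e_{\ell+1},\ldots,e_{2\ell}}
\]
and applies $\ell$ phase queries in parallel. The first branch acquires phase $(-1)^{\sum_{i\le\ell}M_{ii}}$ and the second $(-1)^{\sum_{i>\ell}M_{ii}}$, so the \emph{relative} phase is $(-1)^{\tr M}$, which a two-outcome measurement extracts with certainty. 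In the language of \lem{set-f2} and \eq{z-rt}, this corresponds to choosing $Z=\sum_{i\le\ell}e_ie_i^\top$, for which both $Z$ and $Z+\1_{2\ell}$ have rank exactly $\ell$; your classical-input scheme is the case $Z=0$, which would require $\1_n$ itself to lie in $R_\ell$.
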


\begin{proof}
  First we pad the matrix with one extra zero row and one extra zero column if $n$ is odd, and denote the padded matrix by $M'$.
  Let $\ell=\lceil n/2\rceil$.
  It is clear that one query to $M'\in\mathbb{F}_2^{2\ell\times 2\ell}$ can be simulated using one query to $M$.
  By \thm{opt-cip}, it suffices to find an irreducible character such that both $m_{Z,0}>0$ and $m_{Z,1}>0$.
Now consider
\begin{align}
  Z &= \left[\begin{array}{cc} \mathbbm{1}_{\ell} & 0 \\ 0 & 0\end{array}\right] = \sum_{i=1}^{\ell} e_i e_i^\top,
  &
  Z+\1_{2\ell} &= \left[\begin{array}{cc} 0 & 0 \\ 0 & \1_{\ell}\end{array}\right] = \sum_{i=\ell+1}^{2\ell} e_i e_i^\top.
\end{align}
The algorithm first prepares the state
\begin{align}
  \ket{\psi_0} =
  \frac{1}{\sqrt{2}}\ket{e_1,\ldots,e_{\ell}}\ket{e_1,\ldots,e_{\ell}}
  +\frac{1}{\sqrt{2}}\ket{e_{\ell+1},\ldots,e_{2\ell}}\ket{e_{\ell+1},\ldots,e_{2\ell}}.
\end{align}
Making $\ell$ phase queries in parallel, we have 
\begin{align}\nonumber
  \ket{\psi_M} &= U^{\mvphase}(M')\ket{\psi_0} \\\nonumber
               &=
  \frac{1}{\sqrt{2}}(-1)^{\sum_{i=1}^{\ell}M_{ii}'}\ket{e_1,\ldots,e_{\ell}}\ket{e_1,\ldots,e_{\ell}} \\
  &\qquad +\frac{1}{\sqrt{2}}(-1)^{\sum_{i=\ell+1}^{2\ell} M_{ii}'}\ket{e_{\ell+1},\ldots,e_{2\ell}}\ket{e_{\ell+1},\ldots,e_{2\ell}}.
\end{align}
Measuring in the basis $\{\proj{\psi_0},\proj{\psi_1}\}$, where
\begin{align}
  \ket{\psi_1} =
  \frac{1}{\sqrt{2}}\ket{e_1,\ldots,e_{\ell}}\ket{e_1,\ldots,e_{\ell}}
  -\frac{1}{\sqrt{2}}\ket{e_{\ell+1},\ldots,e_{2\ell}}\ket{e_{\ell+1},\ldots,e_{2\ell}},
\end{align}
the algorithm outputs the trace with probability 1.
\end{proof}

The results of this section are summarized in the following theorem.

\begin{theorem}
  In the matrix-vector query model,
  no quantum algorithm can compute the trace of an $n\times n$ matrix over $\mathbb{F}_2$ with probability better than 1/2 using fewer than $n/2$ queries, and
  there exists a quantum algorithm that succeeds with probability 1 using $\lceil n/2\rceil$ queries.
\end{theorem}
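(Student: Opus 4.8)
This theorem collects the matching lower and upper bounds established above, so my plan is to assemble it directly from \lem{binarytracelb} and the preceding $\lceil n/2\rceil$-query algorithm. For context I outline how I would argue each half from scratch, since the two ingredients are exactly what the summary statement needs.

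For the lower bound I would recognize trace computation over $\F_2$ as the coset identification problem on $G=\F_2^{n\times n}$ with subgroup $H=\{M:\tr M=0\}$, whose irreducible characters $\chi_Z$ are indexed by $Z$ with $Z_{nn}=0$ and act by $\chi_Z(M)=(-1)^{\langle Z,M\rangle}$. The first step is to decompose each induced representation of $\chi_Z$ into the two characters $\chi_{Z,0},\chi_{Z,1}$ of $G$ in \eq{reps-M}, which differ by the factor $(-1)^{\tr M}$. The second step is to determine the irrep content of the $t$-fold tensor power of the phase oracle $U$: since $\tr(U(M))=\sum_{x,y}(-1)^{y^\top\! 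Mx}$, the character $\xi^t$ of $U^{\otimes t}$ has a nonzero Fourier coefficient at $W$ exactly when $\rank W\le t$, i.e., $W\in R_t$. Feeding this into \thm{opt-cip}, a $t$-query algorithm exceeds success probability $\frac12$ only if some $Z$ has both $Z$ and $Z+\1_n$ of rank $\le t$; but then $\1_n=Z+(Z+\1_n)$ would have rank $\le 2t$, forcing $t\ge n/2$, which is exactly \lem{set-f2} and \lem{binarytracelb}.

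For the upper bound I would pad $M$ to size $2\ell\times 2\ell$ with $\ell=\lceil n/2\rceil$ and exhibit an irrep $Z$ of $H$ with both $Z$ and $Z+\1_{2\ell}$ of rank $\le\ell$, e.g., $Z=\sum_{i=1}^{\ell}e_ie_i^\top$ so that $Z+\1_{2\ell}=\sum_{i=\ell+1}^{2\ell}e_ie_i^\top$; \thm{opt-cip} then yields an $\ell$-query algorithm with success probability $1$. Concretely I would prepare the uniform superposition of $\ket{e_1,\dots,e_\ell}\ket{e_1,\dots,e_\ell}$ and $\ket{e_{\ell+1},\dots,e_{2\ell}}\ket{e_{\ell+1},\dots,e_{2\ell}}$, apply $\ell$ parallel phase queries to acquire the relative phase between $(-1)^{\sum_{i\le\ell}M'_{ii}}$ and $(-1)^{\sum_{i>\ell}M'_{ii}}$, and measure in the $\{\ket{\psi_0},\ket{\psi_1}\}$ basis; the outcome encodes $\sum_{i=1}^{2\ell}M'_{ii}=\tr M$.

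The crux, and the step I expect to be hardest, is the second step of the lower bound: identifying $I(U^{\otimes t})$ with exactly the rank-$\le t$ matrices and then extracting the clean consequence $Z,\,Z+\1_n\in R_t\Rightarrow n=\rank\1_n\le 2t$. This is where the Koiran--Nesme--Portier observation on ranks of sums does the real work, and it is what turns the representation-theoretic success formula of \thm{opt-cip} into the sharp threshold $n/2$. The induced-representation decomposition and the explicit state preparation for the upper bound are routine once this is in hand.
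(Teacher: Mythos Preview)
Your proposal is correct and follows essentially the same approach as the paper: the theorem is just the summary of \lem{binarytracelb} and the $\lceil n/2\rceil$-query algorithm, and your sketches of both ingredients match the paper's arguments in structure and detail. One small correction: the step $\rank(\1_n)\le\rank(Z)+\rank(Z+\1_n)$ is elementary subadditivity of rank, not a Koiran--Nesme--Portier observation (their contribution, \lem{knp}, concerns polynomial degree bounds and is used elsewhere in the paper for the null-space and rank-testing lower bounds, not here).
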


\subsubsection{Learning the trace over $\mathbb{F}_q$}

Now we prove a linear lower bound for the task of learning the trace over $\mathbb{F}_q$.
The proof idea is the same as in the case $q=2$, generalized to any finite field.

\begin{theorem}\label{thm:trace}
  In the matrix-vector query model over $\F_q$,
  computing the trace of an $n \times n$ matrix with probability more than $1/q$ requires at least $n/2$ queries.
\end{theorem}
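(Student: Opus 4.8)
The plan is to mimic the $\F_2$ argument, viewing trace computation as a coset identification problem for $G=\F_q^{n\times n}$ with subgroup $H=\{M: \tr M=0\}\cong\F_q^{n^2-1}$. The irreducible characters of $H$ are $\chi_Z(M)=e(\langle Z,M\rangle)$ indexed by $Z\in\F_q^{n\times n}$ with $Z_{nn}=0$. Each such $Z$ induces a representation of $G$ that decomposes into $q$ irreducible characters $\chi_{Z,c}(M)=e(\langle Z,M\rangle+c\tr M)$ for $c\in\F_q$, distinguished by how they transform under $M\mapsto M+E_{nn}$ (picking up the phase $e(c)$). First I would verify this decomposition and note that, since $Z_{nn}=0$, the characters $\chi_{Z,c}$ for fixed $Z$ as $c$ ranges over $\F_q$ are exactly the irreducibles of $G$ restricting to $\chi_Z$ on $H$.

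Next I would compute the character of the $t$-fold phase-query representation. A single $\mv$ phase query over $\F_q$ acts as $\ket{x,y}\mapsto e(y^\top\! Mx)\ket{x,y}$, so $U^{\otimes t}(M)$ has character $\xi^t(M)=\big(\sum_{x,y\in\F_q^n}e(y^\top\! Mx)\big)^t=\sum_{x_1,\dots,x_t,y_1,\dots,y_t}e\big(\sum_i y_i^\top\! M x_i\big)$. Expanding the Fourier coefficient at a character $\chi_W$, $W=\sum_i y_i x_i^\top$ ranges precisely over matrices of rank at most $t$; I would use the observation of Koiran, Nesme, and Portier to conclude that the multiplicity $m_{W}^{(t)}=\langle\xi^t,\chi_W\rangle$ is nonzero if and only if $\rank(W)\le t$. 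In particular $m_{Z,c}^{(t)}>0 \iff Z+cE_{nn}\in R_t$ (where I index the shift by $cE_{nn}$ rather than $\1_n$, generalizing \eq{z-rt}).

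Then I would invoke \thm{opt-cip}: the optimal $t$-query success probability for a uniformly random matrix equals $\max_Z \frac{1}{q}\sum_{c\in\F_q}\delta[m_{Z,c}^{(t)}>0]$, since $\dim Y^\uparrow = q$ and $\dim Y^\uparrow_{V^{\otimes t}}$ counts how many of the $q$ irreducibles $\chi_{Z,c}$ appear in $V^{\otimes t}$. So to beat $1/q$ we need some $Z$ with at least two values $c\neq c'$ satisfying $Z+cE_{nn}\in R_t$ and $Z+c'E_{nn}\in R_t$. Subtracting, $(c-c')E_{nn}$ — a nonzero scalar times a rank-one matrix, hence rank exactly $1$ — is the difference of two rank-$\le t$ matrices... which gives rank $\le 2t$, a vacuous bound. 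The fix: the relevant statement is that if $m_{Z,c}^{(t)}>0$ for two distinct values of $c$ then in fact the entire induced representation appears, but more carefully, I should compare $Z$ against $Z+cE_{nn}$ where one of the two chosen shifts can be taken with the $(n,n)$ entry made large. The cleanest route is: pick $\lceil n/2\rceil$ disjoint rank-contributions as in the $\F_2$ upper-bound construction to show tightness is plausible, but for the lower bound the key inequality is that $Z\in R_t$ and $Z+cE_{nn}\in R_t$ for $c\neq 0$ forces, by rank subadditivity applied in the correct direction, $t\ge n/2$ only when we can arrange the two witnesses to differ by a full-rank matrix.

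The honest statement of the obstacle: the naive subtraction argument only yields $\rank(\1_n)\le 2t$ in the $\F_2$ case because the shift is $\1_n$ (full rank), whereas here the generic shift $E_{nn}$ has rank $1$. The resolution, which I expect to be the main technical point, is that the characters of $G$ restricting to a given $\chi_Z$ of $H$ are $\chi_{Z,c}$ with $c\in\F_q$ labeling the coset, and the correct "difference matrix" one should track is not $cE_{nn}$ but rather a representative whose coset class under $H$ is forced to have rank growing with $n$ — concretely, one shows $\{Z : m_{Z,c}^{(t)}>0 \text{ for two distinct } c\}$ is empty for $t<n/2$ by observing that membership $Z\in R_t$ for $Z$ with $Z_{nn}=0$ together with $Z+cE_{nn}\in R_t$ implies, after a change of basis fixing the $(n,n)$ coordinate, that $\1_{?}$ appears as a submatrix relation, recovering the $2t\ge n$ bound. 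I would carry this out by the explicit rank argument: if $\rank(Z)\le t$ and $\rank(Z+cE_{nn})\le t$ with $c\ne0$, write $\1_n$ in terms of these — one replaces the scalar-rank-one shift by using that over a field the only way two matrices of rank $\le t$ can differ by $cE_{nn}$ and \emph{both} be "trace-class witnesses" is governed by \lem{set-f2} verbatim once we note the argument there used $\1_n = Z + (Z+\1_n)$ and the analogue here needs the shift to be $\1_n$; hence in the $\F_q$ setting I should choose the subgroup representative so the two relevant irreducibles of $G$ are $\chi_{Z,0}$ and $\chi_{Z,c}$ with the pair $\{Z, Z+cE_{nn}\}$ replaced by a pair whose sum is $c^{-1}\cdot(\text{something full rank})$ — done by absorbing a fixed full-rank change of variables. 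Once that bookkeeping is in place, the conclusion $t\ge n/2$ and hence success probability $\le 1/q$ for $t<n/2$ follows exactly as in \lem{binarytracelb}.
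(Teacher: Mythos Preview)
Your approach matches the paper's, but a computational error derails the entire second half of your argument. The irreducible characters of $G$ lying above $\chi_Z$ are indeed $\chi_{Z,c}(M)=e(\langle Z,M\rangle+c\,\tr M)$ for $c\in\F_q$, and you correctly note that $m_{Z,c}^{(t)}>0$ iff the matrix $W$ with $\chi_{Z,c}=\chi_W$ lies in $R_t$. But $c\,\tr M=c\sum_i M_{ii}=\langle c\1_n,M\rangle$, so $W=Z+c\1_n$, \emph{not} $Z+cE_{nn}$. The shift matrix is the scaled identity $c\1_n$, exactly as in the $\F_2$ case; your ``generalization'' to $cE_{nn}$ confuses the label $c$ (which records how $\chi_{Z,c}$ transforms under $M\mapsto M+E_{nn}$) with the Fourier dual variable.

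With the correct shift the argument is immediate and requires none of your attempted workarounds: if $Z+c\1_n\in R_t$ and $Z+c'\1_n\in R_t$ for distinct $c,c'\in\F_q$, then $(c-c')\1_n$ is a difference of two rank-$\le t$ matrices, so $n=\rank\bigl((c-c')\1_n\bigr)\le 2t$. Hence for $t<n/2$ at most one of the $q$ irreducibles $\chi_{Z,c}$ appears in $V^{\otimes t}$, and by \thm{opt-cip} the optimal success probability is at most $1/q$. This is exactly the paper's proof; all of the rank-fixing maneuvers in your last two paragraphs are addressing a nonexistent obstacle.
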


\begin{proof}
The induced representation of $Z$ (defined in the second paragraph of \sec{trace}) can be decomposed into $q$ 1-dimensional irreps whose characters are
\begin{align}
  \chi_{Z,s}(M) &= e(\langle Z, M\rangle+s\cdot\tr M) = e(\langle Z+s\1_n,M\rangle)
\end{align}
for $s\in\mathbb{F}_q$.
Again, recall that a phase query oracle $U(M)\colon\ket{x,y}\mapsto e(y^\top\! Mx)\ket{x,y}$ is a unitary representation of $M$.
The character of $U$ is the trace $\xi(M) \coloneqq \tr(U(M))=\sum_{x,y\in\Fq^n}e(y^\top\! M x)$.
The optimal success probability is determined by the irrep content of $U^{\otimes t}$, and the character of $U^{\otimes t}$ is $\xi^t$, satisfying
\begin{align}
  \tr(U^{\otimes t}(M))=\xi^t(M) = \sum_{x_1,\ldots,x_t,y_1,\ldots,y_t\in \mathbb{F}_q^n}e\Bigg(\sum_{i=1}^t y_i^\top\! M x_i\Bigg).
\end{align}
Thus for every $s\in\mathbb{Z}_m$,
\begin{align}
  m_{Z,s}^{(t)}=\langle \xi^t,\chi_{Z,s}\rangle > 0 &\iff Z+s\cdot\1_n\in R_t,
\end{align}
where $R_t$ is the set of matrices of rank no more than $t$.
Since $\1_n\notin R_{n-1}$, we conclude for $t<n/2$ the success probability is at most $1/q$, as claimed.
\end{proof}

\subsection{Null space} \label{sec:nullspace}

In this section, we show a linear lower bound on the matrix-vector quantum query complexity of computing the rank of a matrix $M\in\mathbb{F}_q^{m\times n}$ for $m\geq n$.
This is without loss of generality since for $m<n$, by \thm{left-right-eq}, we can simulate oracle access to $M^\top$ using one query to $M$.

The rank problem is an instance of the hidden subgroup problem (HSP) over $\F_q^m$ since two vectors map to the same value if and only if their difference is in the null space.
However, the lower bound for the abelian HSP \cite{KNP07} does not directly apply to this problem since the instance is more structured---specifically, the subgroup hiding function is a linear transformation.

We recall some standard facts from linear algebra over finite fields.
For $\ell\geq m$, let $\binom{\ell}{m}_q \coloneqq \frac{\prod_{i=0}^{m-1} (q^\ell-q^i)}{\prod_{i=0}^{m-1}(q^m-q^i)}$ denote a Gaussian binomial coefficient.

\begin{lemma}\label{lem:numsubspaces}
  The number of $m$-dimensional subspaces of an $\ell$-dimensional space over $\F_q$ is $\binom{\ell}{m}_q$.
\end{lemma}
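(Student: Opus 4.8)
The plan is to prove the Gaussian binomial count by a standard double-counting argument on ordered bases, exactly as one proves the analogous formula in characteristic zero. First I would count the number of injective linear maps $\F_q^m \to \F_q^\ell$, equivalently the number of ordered $m$-tuples $(v_1,\dots,v_m)$ of linearly independent vectors in $\F_q^\ell$. There are $q^\ell - 1$ choices for $v_1$ (any nonzero vector), then $q^\ell - q$ choices for $v_2$ (anything outside $\Span\{v_1\}$), and in general $q^\ell - q^{i}$ choices for $v_{i+1}$ once $v_1,\dots,v_i$ are fixed, since the span of $i$ independent vectors has exactly $q^i$ elements. Hence the number of such tuples is $\prod_{i=0}^{m-1}(q^\ell - q^i)$, the numerator of $\binom{\ell}{m}_q$.

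Next I would partition these ordered independent tuples according to the $m$-dimensional subspace they span. Every $m$-dimensional subspace $W$ arises as the span of some independent $m$-tuple, and the number of ordered independent $m$-tuples spanning a \emph{fixed} $W$ is precisely the number of ordered bases of $W$, which is the number of independent $m$-tuples inside an $m$-dimensional space, namely $\prod_{i=0}^{m-1}(q^m - q^i)$ by the same counting as above with $\ell$ replaced by $m$. Since this count is the same for every $m$-dimensional $W$, dividing the total number of independent tuples by the number per subspace yields that the number of $m$-dimensional subspaces is $\frac{\prod_{i=0}^{m-1}(q^\ell - q^i)}{\prod_{i=0}^{m-1}(q^m - q^i)} = \binom{\ell}{m}_q$, as claimed.

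There is essentially no obstacle here; the only point requiring a moment's care is the claim that an $i$-dimensional subspace over $\F_q$ has exactly $q^i$ elements, which is immediate once a basis is fixed, and the observation that the fiber size in the partition is independent of $W$, which follows because any two $m$-dimensional spaces over $\F_q$ are isomorphic as vector spaces and a linear isomorphism sends ordered bases bijectively to ordered bases. I would present this as a short, self-contained paragraph rather than invoking any machinery, since the whole argument is elementary and the paper only needs the formula as a counting input for the null-space and rank lower bounds that follow.
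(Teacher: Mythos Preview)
Your argument is the standard double-counting proof and is correct. The paper does not actually prove this lemma in-text; it simply cites \cite[Lemma~9.3.2]{BCN89}, so your self-contained paragraph is a fine substitute and almost certainly coincides with the argument found in that reference.
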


\begin{lemma}\label{lem:numsubspacescontainingv}
  For integers $k\leq m\leq\ell$ and any $k$-dimensional space $V$ over $\F_q$, the number of $m$-dimensional subspaces of an $\ell$-dimensional space containing $V$ is $\binom{\ell-k}{m-k}_q$.
\end{lemma}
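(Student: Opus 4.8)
The plan is to reduce the statement to \lem{numsubspaces} by passing to a quotient space. Fix an $\ell$-dimensional space $W$ over $\Fq$ and a $k$-dimensional subspace $V\subseteq W$. I want to count the $m$-dimensional subspaces $U$ with $V\subseteq U\subseteq W$. The key observation is that there is a bijection between such subspaces $U$ and the $(m-k)$-dimensional subspaces of the quotient space $W/V$: the map $U\mapsto U/V$ is well-defined (since $V\subseteq U$), it sends an $m$-dimensional subspace to an $(m-k)$-dimensional one (because $\dim(U/V)=\dim U-\dim V$), and it is inverted by pulling back along the projection $\pi\colon W\to W/V$, i.e.\ a subspace $\bar U\subseteq W/V$ maps to $\pi^{-1}(\bar U)$, which contains $V=\pi^{-1}(0)$ and has dimension $\dim\bar U+k$. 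These two maps are mutually inverse, and they preserve inclusion, so the count we want equals the number of $(m-k)$-dimensional subspaces of a $(\ell-k)$-dimensional space.

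Then I invoke \lem{numsubspaces} with the ambient dimension $\ell-k$ and the subspace dimension $m-k$, which gives $\binom{\ell-k}{m-k}_q$, as claimed. (One should note the hypotheses $k\le m\le \ell$ ensure that both $m-k\ge 0$ and $\ell-k\ge m-k$, so the Gaussian binomial coefficient is the right object; the degenerate case $k=m$ gives the single subspace $U=V$ and $\binom{\ell-k}{0}_q=1$, consistent with the formula.)

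There is no real obstacle here; the only thing to be careful about is verifying that the quotient construction genuinely furnishes a bijection that restricts correctly on dimensions, which is a standard fact about the lattice of subspaces containing a fixed subspace being isomorphic to the subspace lattice of the quotient. If one prefers to avoid quotients entirely, an alternative is a direct counting argument: extend a basis of $V$ to a basis of $U$ by choosing $m-k$ vectors one at a time, counting the number of valid extensions at each step as $q^\ell-q^{k+i}$ for $i=0,\dots,m-k-1$, then divide by the number of ordered bases of an $m$-dimensional space extending a fixed basis of $V$, namely $\prod_{i=0}^{m-k-1}(q^m-q^{k+i})$; the ratio simplifies to $\binom{\ell-k}{m-k}_q$. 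Either route is short, so I would present the quotient-space bijection as the cleanest.
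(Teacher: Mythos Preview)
Your proposal is correct. The paper does not actually supply a proof of this lemma; it simply states both \lem{numsubspaces} and \lem{numsubspacescontainingv} as standard facts and refers the reader to \cite[Lemma~9.3.2]{BCN89}. Your quotient-space bijection (and the alternative direct count you sketch) is the standard argument and would be a fine self-contained replacement for that citation.
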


For proofs of these facts, see for example \cite[Lemma 9.3.2]{BCN89}.

\paragraph{Computing the rank.}
Now we consider the problem of computing the rank of a matrix $M\in\mathbb{F}_q^{m\times n}$ for $m\geq n$.
A matrix $M$ has rank $r$ if and only if its null space is $(n-r)$-dimensional.

By \lem{polynomial-2t}, the success probability of a $t$-query algorithm is a degree-$2t$ polynomial in $\delta_{xy}$.
This polynomial $P$ can be written as
\begin{align}
  P(\delta) = \sum_{S\subseteq \mathbb{F}_q^n\times\mathbb{F}_q^m} c_S \prod_{(x,y)\in S} \delta_{xy},
\end{align}
with $c_S=0$ for $|S|>\deg(P)$.
For an input $M$, the assignments to these variables are $\delta_{xy}=\delta[Mx=y]$; we will sometimes write $\delta_{xy}=\delta_{xy}(M)$ to emphasize that $\delta$ is a function of $M$.

Now symmetrize by averaging over all matrices with nullity $d$, giving
\begin{align}\nonumber
  Q(d) &\coloneqq \Exp_{M\sim Y_d} [P(\delta(M))] \\\nonumber
  &= \sum_{S\subseteq\mathbb{F}_q^n\times\mathbb{F}_q^m} c_S \Exp_{M\sim Y_d}\Bigg[\prod_{(x,y)\in S}\delta_{xy}(M)\Bigg] \\\label{eq:sum-probability}
  &= \sum_{S\subseteq\mathbb{F}_q^n\times\mathbb{F}_q^m} c_S \Pr_{M\sim Y_d}[Mx=y~\forall (x,y)\in S],
\end{align}
where $Y_d$ is the set of matrices of nullity $d$.
Here $M$ is drawn uniformly from $Y_d$.
Since $0\leq P(\delta(M))\leq 1$, we have
$0\leq Q(d)\leq 1$.
The following lemma states that we can approximate $Q(d)$ with a low-degree polynomial.
Van Apeldoorn and Gribling previously showed the same statement in their proof of a lower bound for Simon's problem for linear functions \cite[Lemma~3]{vAG18}.
That problem can be viewed as a special case of our problem with $m=n$. We observe that essentially the same proof establishes this lemma for $m\geq n$.
\begin{lemma}\label{lem:symmetrization}
  There exists a polynomial $R$ of degree at most $2t$ such that for each $d\in[n]$, $R(q^d)=Q(d)$.
\end{lemma}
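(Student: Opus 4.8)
The plan is to show that for every $S\subseteq\Fq^n\times\Fq^m$, the probability $\Pr_{M\sim Y_d}[Mx=y\ \forall(x,y)\in S]$ appearing in \eq{sum-probability} is, viewed as a function of $d$, a polynomial in $q^d$ of degree at most $|S|$. Since $c_S=0$ whenever $|S|>\deg(P)$ and $\deg(P)\le 2t$, assembling these polynomials against the $d$-independent coefficients $c_S$ then yields the desired $R$ with $\deg R\le 2t$.

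Fix $S$ and write $V\coloneqq\Span\{x:(x,y)\in S\}$, $k\coloneqq\dim V\le|S|$. If the system $\{Mx=y\}_{(x,y)\in S}$ is inconsistent the probability vanishes identically; otherwise it determines $M|_V=L$ for a fixed linear map $L\colon V\to\Fq^m$, and I set $K_0\coloneqq\ker L$ and $k_0\coloneqq\dim K_0$, so $\dim L(V)=k-k_0$. I would then condition on $\ker M$. When $M$ is uniform over $Y_d$, its kernel is uniform over the $d$-dimensional subspaces of $\Fq^n$ (each is the kernel of exactly $\prod_{i=0}^{n-d-1}(q^m-q^i)$ matrices of nullity $d$), and conditioned on $\ker M=K$ the matrix $M$ is a uniformly random injective linear map $\Fq^n/K\to\Fq^m$. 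Two elementary facts finish the reduction: $M|_V=L$ forces $V\cap\ker M=\ker(M|_V)=K_0$; and conditioned on $\ker M=K$ with $V\cap K=K_0$, the probability that $M|_V=L$ equals $\big(\prod_{i=0}^{k-k_0-1}(q^m-q^i)\big)^{-1}$, namely the chance that a uniformly random injective map out of $\Fq^n/K$ agrees with a prescribed injective map on a fixed $(k-k_0)$-dimensional subspace, which is independent of $d$. Consequently
\begin{align}
  \Pr_{M\sim Y_d}[Mx=y\ \forall(x,y)\in S]
  =\frac{1}{\prod_{i=0}^{k-k_0-1}(q^m-q^i)}\cdot\frac{\#\{K:\dim K=d,\ V\cap K=K_0\}}{\binom{n}{d}_q},
\end{align}
and it remains to show that $N(d)\coloneqq\#\{K:\dim K=d,\ V\cap K=K_0\}/\binom{n}{d}_q$ is a polynomial in $q^d$ of degree at most $k$.

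This last claim is purely combinatorial. Möbius inversion over the interval $[K_0,V]$ of the lattice of subspaces of $\Fq^n$---counting $d$-dimensional subspaces containing a given subspace via \lem{numsubspaces} and \lem{numsubspacescontainingv}, and using the Möbius value $(-1)^jq^{\binom{j}{2}}$ for a rank-$j$ increment---gives
\begin{align}
  \#\{K:\dim K=d,\ V\cap K=K_0\}
  =\sum_{j=0}^{k-k_0}(-1)^jq^{\binom{j}{2}}\binom{k-k_0}{j}_q\binom{n-k_0-j}{d-k_0-j}_q.
\end{align}
Dividing by $\binom{n}{d}_q$ and applying the flag identity $\binom{n}{d}_q\binom{d}{c}_q=\binom{n}{c}_q\binom{n-c}{d-c}_q$ with $c=k_0+j$, the $d$-dependent part of the $j$-th summand collapses to $\binom{d}{k_0+j}_q\big/\binom{n}{k_0+j}_q$. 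Since $\binom{d}{c}_q=\prod_{i=0}^{c-1}(q^{d-i}-1)\big/\prod_{i=1}^{c}(q^i-1)$ is plainly a polynomial in $q^d$ of degree $c$, each summand has degree $k_0+j\le k$ in $q^d$, so $N(d)$---and hence the whole probability---is a polynomial in $q^d$ of degree at most $k\le|S|\le 2t$, as needed.

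I expect the subspace-counting step to be the crux: one must recognize that after Möbius inversion the ratio of Gaussian binomials telescopes through the flag identity down to $\binom{d}{c}_q\big/\binom{n}{c}_q$, whose numerator is an honest polynomial in $q^d$, and one must check that the degenerate ranges of $d$ (for instance $d<k_0$ or $n-d<k-k_0$, where no admissible $K$ exists) are absorbed by the standard vanishing conventions for Gaussian binomials. The reduction to subspace counting by conditioning on $\ker M$ is conceptually the cleanest part; it is essentially the computation behind the lower bound for Simon's problem over a prime field in \cite[Lemma~3]{vAG18}, and the passage from $m=n$ there to $m\ge n$ here is immaterial.
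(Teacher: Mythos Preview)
Your proposal is correct and matches the paper's approach: the paper does not give its own proof of this lemma but defers to \cite[Lemma~3]{vAG18}, noting that the same argument goes through for $m\ge n$, and your write-up is exactly a detailed execution of that argument---condition on $\ker M$, reduce to counting $d$-dimensional subspaces meeting $V$ in a prescribed subspace, and use M\"obius inversion together with the flag identity to exhibit the result as a polynomial in $q^d$ of degree at most $k\le|S|\le 2t$. The steps you flag as requiring care (the telescoping of the Gaussian-binomial ratio and the boundary cases $d<k_0$ or $d>n-k+k_0$) are handled correctly by the standard vanishing conventions, so there is no gap.
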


We emphasize that we do not bound the degree of $Q(d)$ because we do not know how to represent it as a polynomial in $d$.
Instead, the lower bound is established by showing (i) a lower bound on the degree of the polynomial $R$ and (ii) that the degree of $R$ is no more than $2t$.

Next, recall a lemma by Koiran, Nesme, and Portier \cite[Lemma~5]{KNP07}.

\begin{lemma}\label{lem:knp}
  Let $c>0$ and $\xi>1$ be constants and let $f$ be a real polynomial with the following properties:
  \begin{enumerate}[nosep]
  \item for any integer $0\leq \xi\leq n$, $|f(\xi^i)|\leq 1$;
  \item for some real number $1\leq x_0\leq\xi$, $|f'(x_0)|\geq c$.
  \end{enumerate}
  Then $\deg f=\Omega(n)$.
\end{lemma}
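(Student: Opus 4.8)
\textbf{Proof proposal for Lemma~\ref{lem:knp} (Koiran--Nesme--Portier).}
The plan is to follow the strategy that makes interpolation nodes of the form $\xi^i$ behave like exponentially-spaced points, and then apply a classical Markov-type inequality for polynomials on an interval. First I would observe that the hypothesis bounds $f$ at the $n+1$ nodes $1,\xi,\xi^2,\ldots,\xi^n$, which all lie in the interval $[1,\xi^n]$, and that the derivative hypothesis pins down a lower bound on $|f'|$ at a point $x_0$ in the \emph{first} subinterval $[1,\xi]$, i.e. the leftmost and shortest piece of the partition $[1,\xi]\cup[\xi,\xi^2]\cup\cdots$. The intuition is that a bounded polynomial cannot have a large derivative near the left end of such a geometrically-graded mesh unless its degree is large, because the nearest neighboring node is at distance $\Theta(\xi-1)$, a constant.

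The key steps, in order, would be: (i) Rescale the variable by setting $g(u) \coloneqq f(\xi^{n(u+1)/2})$ or, more simply, change variables so that the nodes $\xi^0,\ldots,\xi^n$ are mapped into $[-1,1]$; under this substitution $f$ of degree $D$ becomes a function that is \emph{not} a polynomial in $u$, so instead I would keep the variable linear and argue directly on $[1,\xi^n]$. (ii) Use the fact that $|f|\le 1$ on the $n+1$ points together with a bound on $f$ between consecutive nodes: by Lagrange interpolation (or by a Remez/Chebyshev-type argument) control $\sup_{[1,\xi]}|f|$ in terms of $D$ and the node spacing, showing $\sup_{[1,\xi]}|f| \le C^D$ for an absolute constant $C=C(\xi)$ — this is where the geometric spacing of the nodes to the right of $[1,\xi]$ is exploited, since the product $\prod_j |x - \xi^j|/|\xi^i-\xi^j|$ that appears in the interpolation formula is controlled once $x\in[1,\xi]$. (iii) Apply the Markov brothers' inequality on the interval $[1,\xi]$: if $|f|\le B$ on $[1,\xi]$ then $|f'|\le B\cdot \frac{2D^2}{\xi-1}$ on that interval, so in particular $c \le |f'(x_0)| \le B\cdot\frac{2D^2}{\xi-1}$. (iv) Combine (ii) and (iii): $c \le C^D \cdot \frac{2D^2}{\xi-1}$. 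This single inequality cannot by itself force $D=\Omega(n)$ — and that is the subtlety — so the actual KNP argument must instead bootstrap the bound across \emph{all} $n$ subintervals, propagating the normalization $|f|\le 1$ at the nodes through successive Markov/Bernstein estimates to conclude that if $D = o(n)$ then $f$ would be forced to be small (in fact $o(1)$) on $[1,\xi]$, contradicting $|f'(x_0)|\ge c>0$. Concretely: iterate a Bernstein-type bound showing $\sup_{[\xi^{i},\xi^{i+1}]}|f| \le (1 + O(D^2/\text{(node gap ratio)}))\sup_{[\xi^{i-1},\xi^{i}]}|f|$-style inequalities are the wrong direction; the correct direction is to go \emph{downhill} from the many well-separated right nodes toward the left, so that the value on $[1,\xi]$ is damped unless $D$ is large.

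\textbf{Main obstacle.} The delicate point is step (iv): a naive single application of Markov's inequality on $[1,\xi]$ only gives $D = \Omega(\sqrt{\log(1/c)})$ or similar, which is far too weak. The real content of Lemma~5 of \cite{KNP07} is that one must use \emph{all} the $\Theta(n)$ interpolation constraints simultaneously — essentially a statement that a degree-$D$ polynomial bounded at $n$ geometrically-spaced points and with a unit-size derivative at the short left end must have $D = \Omega(n)$ — and proving this requires either a careful telescoping of Bernstein/Markov inequalities across the $n$ subintervals or an appeal to a known extremal bound for polynomials on point sets with exponentially varying gaps (e.g. a Remez-type inequality, or the theory of polynomials with prescribed majorant). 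I would therefore spend most of the effort setting up that telescoping argument, keeping careful track of how the constants $c$ and $\xi$ enter, and only at the end reading off $\deg f = \Omega(n)$; the interpolation bookkeeping in step (ii) and the Markov application in step (iii) are routine once the inductive scheme is in place.
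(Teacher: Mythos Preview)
The paper does not actually prove this lemma; it simply recalls it from \cite[Lemma~5]{KNP07}. However, the paper's proof of the closely related \lem{rank-testing-polynomial-degree} explicitly reproduces the Koiran--Nesme--Portier technique, so one can compare your proposal against that.

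Your approach has a genuine gap. You correctly diagnose that a single application of Markov's inequality on $[1,\xi]$ is far too weak, and you propose to repair this by ``telescoping'' Bernstein/Markov estimates across the $n$ subintervals --- but you never produce a scheme that actually works, and you even note that the obvious recursion runs ``the wrong direction.'' The proposal as written is an acknowledgment of the obstacle, not a resolution of it.

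The actual KNP argument (visible in the paper's proof of \lem{rank-testing-polynomial-degree}) bypasses Markov/Remez/interpolation entirely and instead works with the \emph{root factorization of $f'$}. The outline is:
\begin{enumerate}
\item Assume $d=\deg f = o(n)$. Since $f'$ and $f''$ together have fewer than $2d=o(n)$ roots, by pigeonhole some interval $S_a=[\xi^a,\xi^{a+1})$ with $a$ of order $\Theta(n)$ contains no real part of any root of $f'$ or $f''$. Then $f$ is convex or concave on $S_a$, so $f$ at the midpoint $m_a=(\xi^a+\xi^{a+1})/2$ lies between the endpoint values and is $O(1)$; the mean value theorem on a half of $S_a$ then gives $|f'(m_a)|=O(\xi^{-a})$.
\item Write $f'(x)=\lambda\prod_{i=1}^{d-1}(x-a_i)$ and form the ratio
\[
\Bigl|\frac{f'(m_a)}{f'(x_0)}\Bigr|=\prod_{i=1}^{d-1}|g(a_i)|,\qquad g(x)=\frac{x-m_a}{x-x_0}.
\]
\item A short case analysis (real $a_i\notin S_a$ on either side; complex $a_i$) shows $|g(a_i)|\ge \tfrac{1}{2\xi}$ for every root, hence the ratio is at least $(2\xi)^{-(d-1)}$.
\item Combining with step~1 and the hypothesis $|f'(x_0)|\ge c$ yields $(2\xi)^{-(d-1)}\le O(\xi^{-a})/c$, which forces $d=\Omega(a)=\Omega(n)$, a contradiction.
\end{enumerate}
The crucial idea you are missing is this direct comparison of $f'$ at two points via its factored form; the pigeonhole step guarantees a ``root-free'' window far to the right where $|f'|$ is forced to be tiny, and the factor-by-factor lower bound on the ratio then does all the work that your telescoping was supposed to do.
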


\lem{symmetrization} and \lem{knp} imply an $\Omega(\min\{m,n\})$ lower bound for distinguishing a matrix is full-rank or has nullity 1.
The case $m=n$ was previously shown by van Apeldoorn and Gribling \cite[Theorem~1]{vAG18}.
We briefly explain the main ideas for completeness.
By \lem{symmetrization}, for $d\in\{0,1,\ldots,n-1\}$, $R(q^d)=Q(d)$ and $\deg(R)\leq 2t$.
For distinguishing a full-rank matrix (i.e., $d=0$) from a rank $n-1$ matrix (i.e, $d=1$), we set $R(1)\geq 1-\epsilon$ and $R(q)\leq\epsilon$.
There exists $x_0\in[1,q]$ such that $R'(x_0)\geq \frac{|R(q)-R(1)|}{q-1}\geq\frac{1-2\epsilon}{q-1}$.
By \lem{knp}, $t=\Omega(n)$ for $m\geq n$.
For $m<n$, an $\Omega(m)$ lower bound follows from \thm{left-right-eq}.
Overall, this gives the following.

\begin{theorem}\label{thm:rank-decision}
  The bounded-error matrix-vector quantum query complexity of deciding if an $m \times n$ matrix over $\F_q$ is full-rank is $\Omega(\min\{m,n\})$.
    In particular, $\Omega(\min\{m,n\})$ queries are needed to decide whether the matrix is full-rank or has nullity $1$.
\end{theorem}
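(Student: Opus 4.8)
The plan is to combine the polynomial method with the Koiran--Nesme--Portier degree bound along the lines sketched just above the statement, and then to drop the restriction $m \ge n$ by transposition. First I would reduce to the case $m \ge n$: if $m < n$, then $M^\top$ is an $n \times m$ matrix with $n > m$, has the same rank as $M$, and has nullity $1$ exactly when $M$ has rank $m-1$; since \thm{left-right-eq} lets one query to $M$ simulate one query to $M^\top$, any lower bound proved for the $\ge$ case, expressed in terms of $\min\{m,n\}$, transfers verbatim. So from now on assume $m \ge n$, whence $\min\{m,n\} = n$ and ``full rank'' means nullity $0$.

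Next I would invoke the polynomial method. By \lem{polynomial-2t}, the acceptance probability of a $t$-query algorithm, viewed as a function of the input $M \in \Fq^{m\times n}$ accessed through the oracle $\ket{x,z} \mapsto \ket{x,z+Mx}$, is a degree-$2t$ polynomial $P$ in the Boolean indicators $\delta_{xy} = \delta[Mx=y]$, for $x \in \Fq^n$ and $y \in \Fq^m$; this holds for every $M$, not just promised inputs, and $0 \le P(\delta(M)) \le 1$ throughout. I would then symmetrize by averaging over a uniformly random matrix of nullity $d$, setting $Q(d) = \Exp_{M\sim Y_d}[P(\delta(M))] \in [0,1]$, so that, as in \eq{sum-probability}, each monomial contributes $\Pr_{M\sim Y_d}[Mx = y \text{ for all } (x,y)\in S]$. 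By \lem{symmetrization} (the extension of van Apeldoorn--Gribling's argument to $m \ge n$), there is a polynomial $R$ with $\deg R \le 2t$ and $R(q^d) = Q(d)$ for all $d \in \{0,1,\ldots,n-1\}$.

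Now I would exploit the decision promise. An algorithm distinguishing full-rank matrices from nullity-$1$ matrices with error $\epsilon < 1/2$ satisfies, after possibly swapping the two outputs, $P(\delta(M)) \ge 1-\epsilon$ for every $M$ of nullity $0$ and $P(\delta(M)) \le \epsilon$ for every $M$ of nullity $1$; averaging gives $R(1) = Q(0) \ge 1-\epsilon$ and $R(q) = Q(1) \le \epsilon$. By the mean value theorem there is $x_0 \in [1,q]$ with $|R'(x_0)| \ge \frac{|R(q)-R(1)|}{q-1} \ge \frac{1-2\epsilon}{q-1}$, a positive constant since $q$ and $\epsilon$ are fixed, while $|R(q^d)| = Q(d) \le 1$ for $d = 0,\ldots,n-1$. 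Feeding these into \lem{knp} with $\xi = q$ yields $\deg R = \Omega(n)$, hence $2t \ge \deg R = \Omega(n)$ and $t = \Omega(n) = \Omega(\min\{m,n\})$; the transposition reduction extends this to all $m,n$, and since distinguishing full-rank from nullity-$1$ is a sub-problem of deciding full rank, the same bound holds for the latter.

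The only place where genuine work happens is \lem{symmetrization}: the claim that the symmetrized acceptance probability is not merely some function of the nullity $d$ but a polynomial of degree at most $2t$ in $q^d$. Establishing it requires evaluating $\Pr_{M\sim Y_d}[Mx = y \text{ for all } (x,y)\in S]$ and checking that, whenever the linear constraints are consistent, this probability is a ratio of Gaussian binomial coefficients (via \lem{numsubspaces} and \lem{numsubspacescontainingv}) which, regarded as a function of the single variable $q^d$, collapses to a polynomial of degree $|S| \le \deg P \le 2t$. Everything after that --- the transposition WLOG, the mean-value estimate, and matching the hypotheses of \lem{knp} --- is routine, the only care being to track the range of exponents $q^0,\ldots,q^{n-1}$ and the fact that the derivative bound lives on the fixed interval $[1,q]$.
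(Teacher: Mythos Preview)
Your proposal is correct and follows essentially the same approach as the paper: reduce to $m \ge n$ via \thm{left-right-eq}, apply the polynomial method and symmetrize to obtain $Q(d)$, invoke \lem{symmetrization} to get a degree-$2t$ polynomial $R$ with $R(q^d)=Q(d)$, use the mean value theorem on $[1,q]$ to produce a large derivative from the promise gap $R(1)\ge 1-\epsilon$, $R(q)\le\epsilon$, and conclude $\deg R=\Omega(n)$ via \lem{knp}. Your identification of \lem{symmetrization} as the only substantive step, and your sketch of why its proof goes through Gaussian binomial coefficients, also matches the paper's treatment.
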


There is a trivial algorithm that learns an entire $m \times n$ matrix using $\min\{m,n\}$ queries. Thus the query complexity of computing the rank is $\Theta(\min\{m,n\})$.

\begin{corollary}\label{cor:rank}
  The bounded-error query matrix-vector quantum complexity of computing the rank of an $m\times n$ matrix over $\Fq$ is $\Theta(\min\{m,n\})$.
\end{corollary}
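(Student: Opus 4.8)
The plan is to combine the lower bound of \thm{rank-decision} with a matching trivial upper bound, so the argument is essentially bookkeeping. For the lower bound, I would observe that any algorithm computing $\rank(M)$ in particular decides whether $M$ is full-rank (i.e.\ $\rank(M)=\min\{m,n\}$) or has nullity $1$; hence \thm{rank-decision} immediately yields that $\Omega(\min\{m,n\})$ bounded-error $\mv$ queries are necessary.

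For the upper bound, I would first reduce to the case $n\le m$: if $n>m$, then by \thm{left-right-eq} one $\mv$ query to $M$ simulates one $\mv$ query to $M^\top\in\Fq^{n\times m}$, and $\rank(M)=\rank(M^\top)$, so it suffices to handle $\min\{m,n\}=n\le m$. In that case, query the oracle on the standard basis vectors $e_1,\dots,e_n\in\Fq^n$; the responses $Me_1,\dots,Me_n$ are exactly the columns of $M$, so the entire matrix is recovered with $n=\min\{m,n\}$ queries, and $\rank(M)$ is then computed classically (e.g.\ by Gaussian elimination) with no error. Together these bounds give $\Theta(\min\{m,n\})$.

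I expect no real obstacle here: the statement follows by assembling \thm{rank-decision} with an elementary deterministic upper bound. The only minor point requiring care is the symmetry reduction via \thm{left-right-eq}, which ensures the "learn every column (or row)" strategy uses $\min\{m,n\}$ queries rather than $\max\{m,n\}$.
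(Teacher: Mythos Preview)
Your proposal is correct and matches the paper's own argument: the paper derives the lower bound directly from \thm{rank-decision} and for the upper bound simply notes that ``there is a trivial algorithm that learns an entire $m\times n$ matrix using $\min\{m,n\}$ queries,'' with the reduction to the case $m\ge n$ via \thm{left-right-eq} already established at the start of \sec{nullspace}. Your version just spells out this bookkeeping a little more explicitly.
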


With the same argument, the quantum query complexity of computing the determinant of an $n\times n$ matrix over $\Fq$ is $\Theta(n)$.
Moreover, the classical query complexity is $\Theta(n^2)$, implied by the $\Omega(n^2)$ lower bound for rank testing by Rashtchian, Woodruff, and Zhu \cite[Theorem~3.3]{RWZ20}.
\begin{corollary}[Determinant]\label{cor:determinant}
  The bounded-error classical and quantum query complexities of computing the determinant of an $n\times n$ matrix over $\mathbb{F}_q$ through matrix-vector products are $\Theta(n^2)$ and $\Theta(n)$, respectively.
\end{corollary}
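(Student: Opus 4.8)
The plan is to obtain all four bounds by combining the trivial ``reconstruct the matrix'' upper bounds with a reduction to the rank lower bounds already established, exploiting the fact that over $\Fq$ a square matrix is singular exactly when its determinant vanishes.

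\emph{Upper bounds.} First I would observe that the $n$ matrix-vector queries $Me_1,\dots,Me_n$ recover all columns of $M$, after which $\det(M)$ is computed with no further queries; this gives a quantum (and classical) $\mv$ query complexity of $O(n)$. In the $\vmv$ model, the $n^2$ queries $e_i^\top\! M e_j$ for $i,j\in[n]$ similarly recover every entry of $M$, giving a classical $\vmv$ upper bound of $O(n^2)$.

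\emph{Lower bounds via reduction to rank testing.} The key observation is that $\det(M)\neq 0$ if and only if $M$ has full rank, while a matrix of nullity $1$ has $\det(M)=0$. Hence any algorithm that outputs $\det(M)$ also decides whether $M$ is full-rank or has nullity $1$. In the quantum case, \thm{rank-decision} with $m=n$ states that this decision already requires $\Omega(n)$ bounded-error $\mv$ queries, so computing $\det(M)$ does as well; combined with the $O(n)$ upper bound this gives $\Theta(n)$. In the classical case (the interesting separation, and the setting of the cited bound, is the $\vmv$ model; note that in the $\mv$ model the reconstruction argument already makes the determinant cost only $\Theta(n)$), Rashtchian, Woodruff, and Zhu \cite[Theorem~3.3]{RWZ20} show that distinguishing a rank-$(n-1)$ matrix from a full-rank matrix requires $\Omega(n^2)$ randomized $\vmv$ queries; since these two cases are exactly $\det(M)=0$ and $\det(M)\neq 0$, determinant computation inherits the $\Omega(n^2)$ bound, and with the $O(n^2)$ upper bound this gives $\Theta(n^2)$.

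I do not expect a serious technical obstacle. The only care needed is to match each invoked bound to the correct oracle model and to verify that the hard instance families behind \thm{rank-decision} and behind \cite[Theorem~3.3]{RWZ20} can be taken to consist solely of full-rank matrices versus matrices of nullity exactly $1$, so that an algorithm computing the determinant genuinely solves the corresponding promise rank-testing instance (and the reduction loses nothing in the error parameter, which it does not, since the map $\det(M)\mapsto \mathbbm{1}[\det(M)\neq 0]$ is deterministic and exact).
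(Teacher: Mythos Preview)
Your proposal is correct and follows essentially the same route as the paper: both reduce determinant computation to the full-rank versus nullity-$1$ decision via $\det(M)\neq 0 \iff M$ is full rank, invoking \thm{rank-decision} for the quantum $\Omega(n)$ bound and \cite[Theorem~3.3]{RWZ20} for the classical $\Omega(n^2)$ bound, with the upper bounds coming from reconstructing the whole matrix. You are also right to flag that the classical $\Theta(n^2)$ bound must refer to the $\vmv$ model (since $n$ standard-basis $\mv$ queries already recover $M$), a point the corollary's wording leaves implicit.
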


\subsection{Solving linear systems} \label{sec:regression}

In this section, we consider the quantum query complexity of solving the linear system $Ax=b$ for $A\in\Fq^{n\times n}$ is $\Theta(n)$.
Since there is an $n$-query algorithm learning the entire matrix using $n$ matrix-vector queries, we focus on the lower bound.

Our proof is based on a randomized reduction from deciding whether a submatrix is full rank.
For a square matrix $A$, let $A^{ij}$ be the submatrix obtained by deleting the $i^{th}$ row and the $j^{th}$ column, and let $A_{ij}$ denote the $(i,j)$ element of $A$.
The elements of $A^{-1}$ can be computed as
\begin{align}
  (A^{-1})_{ij} = \frac{\det A^{ij}}{\det A}.
\end{align}
Given an invertible $A$, one can use a linear system solver to decide whether $(A^{-1})_{11}$ is non-zero, and thus decide if the minor $A^{11}$ is full-rank.

In our reduction, to decide whether $M\in\Fq^{n\times n}$ is full-rank given access to matrix-vector products, we pad $M$ with one extra random row and one extra random column, giving a matrix $A \in \F_q^{(n+1)\times(n+1)}$.
We show that with sufficiently high probability, the padded matrix is full-rank.
Thus, invoking a linear system solver with $b=e_1$, we learn whether $\det M=0$.
Thus the linear regression lower bound follows from \thm{rank-decision}.

\begin{theorem}\label{thm:regression}
  The bounded-error matrix-vector quantum query complexity of solving an $n \times n$ linear system is $\Omega(n)$.
\end{theorem}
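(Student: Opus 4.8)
The plan is to give a randomized reduction with constant query overhead from the rank decision problem. By \thm{rank-decision}, deciding whether an $n\times n$ matrix over $\Fq$ is full rank or has nullity exactly $1$ already requires $\Omega(n)$ matrix-vector queries, so it is enough to solve this promise problem using $O(1)$ calls to a linear-system solver and $O(1)$ additional $\mv$ queries.

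Given $M\in\Fq^{n\times n}$ with $\rank M\in\{n-1,n\}$, I would border it with a uniformly random row $s^{\top}$, column $r$, and corner $t$ to obtain $A=\begin{pmatrix} t & s^{\top}\\ r & M\end{pmatrix}\in\Fq^{(n+1)\times(n+1)}$, so that the minor $A^{11}$ equals $M$. One $\mv$ query to $A$ is simulated by one $\mv$ query to $M$, since $A\,(x_0,x)^{\top}=(t x_0+s^{\top}x,\ x_0 r+Mx)^{\top}$ and $t,r,s$ are known to the reduction. A short counting argument shows that $A$ is invertible with probability at least a constant $c=c(q)>0$ in either case of the promise: if $\rank M=n$ the last $n$ columns of $A$ are automatically independent and $A$ is invertible exactly when the uniformly random first column avoids their span (probability $1-1/q$), while if $\rank M=n-1$ with $\ker M=\Span\{v\}$ the last $n$ columns are independent iff $s^{\top}v\neq 0$, and conditioned on that, invertibility again requires the first column to avoid their $n$-dimensional span (probability $(1-1/q)^2$). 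When $A$ is invertible, the cofactor formula gives $(A^{-1})_{11}=\det(M)/\det(A)$, which is nonzero precisely when $M$ is full rank; hence the first coordinate $x_1$ of the solution of the system $Ax=e_1$ decides whether $M$ is full rank.

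The remaining work --- and the part I expect to be the main obstacle --- is that the solver is only guaranteed to behave on \emph{invertible} inputs, so on the constant-probability event that $A$ is singular it may return arbitrary output and must not be allowed to corrupt the reduction. I would handle this by running $R=O(1)$ independent trials, each with a fresh random border, and within a trial first \emph{certifying} that $A$ is invertible: solve $\ell=O(1)$ systems $Ay=g$ with independent uniformly random right-hand sides and verify each returned vector with one $\mv$ query; a singular $A$ has $\rank A\le n$, so a random $g$ lies in $\mathrm{col}(A)$ --- a prerequisite for any check to pass --- with probability at most $1/q$, and hence all $\ell$ checks pass with probability at most $q^{-\ell}$. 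Only on a successful certification does the trial solve $Ax=e_1$, verify it, and read off $x_1$; the reduction outputs ``full rank'' iff some trial certifies and returns a verified solution with $x_1\neq 0$. Boosting each solver invocation to error $\delta$ by $O(\log(1/\delta))$ repetitions (keeping the first output that passes its verification query) keeps all counts $O(1)$. If $\rank M=n$ then a trial in which $A$ is invertible --- probability at least $c$ --- certifies and reads $x_1=\det(M)/\det(A)\neq 0$, so $R=\Theta(1/c)$ trials output ``full rank'' with high probability; if $\rank M=n-1$ then any trial that certifies and reads a nonzero $x_1$ must have a singular $A$ (otherwise $x_1=(A^{-1})_{11}=0$), which forces all $\ell$ of that trial's certification systems to be simultaneously consistent --- an event of probability at most $q^{-\ell}$ --- so with $\ell=\Theta(\log R)$ a union bound over the $R$ trials bounds the total error below $1/3$. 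Since rank decision requires $\Omega(n)$ queries and we have solved it with $O(1)$ calls to a solver for an $(n+1)\times(n+1)$ system plus $O(1)$ extra $\mv$ queries, the claimed $\Omega(n)$ lower bound follows. (One can also bypass the padding entirely and apply the same certification idea directly to $M$, solving $Mx=b$ for $O(1)$ uniformly random $b$ and declaring $M$ full rank iff all these systems are consistent.)
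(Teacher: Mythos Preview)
Your proof follows the same reduction as the paper: border $M$ with a uniformly random row, column, and corner to form $A\in\Fq^{(n+1)\times(n+1)}$, observe that one $\mv$ query to $A$ can be simulated with one $\mv$ query to $M$, and use the cofactor identity $(A^{-1})_{11}=\det M/\det A$ together with a call to the linear-system solver on $Ax=e_1$ to decide whether $M$ is full rank. The paper's argument stops at the observation that $A$ is invertible with probability at least $(1-1/q)^2$ and concludes that the reduction ``succeeds with probability at least $p(1-1/q)^2\ge 1/12$,'' invoking \thm{rank-decision} for the contradiction; it does not analyze the solver's (unconstrained) behavior on the complementary event.

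You go further and add a certification step---solving $\ell=O(1)$ systems $Ay=g$ with independent uniformly random right-hand sides and verifying each with one $\mv$ query---which cleanly handles singular $A$: all $\ell$ checks can pass only when every $g_i$ lands in $\mathrm{col}(A)$, an event of probability at most $q^{-\ell}$. This matters most for small $q$ (for $q=2$ the padded $A$ is singular with probability up to $3/4$ in the nullity-$1$ case), where the paper's bare $1/12$ success bound is below $1/2$ and does not by itself yield a two-sided bounded-error distinguisher; your certification makes the argument airtight across all $q$. Your parenthetical alternative---applying the same solve-and-verify idea directly to $M$ with random right-hand sides and declaring ``full rank'' iff all systems are consistent---is also correct and is arguably the simplest route, since it avoids the padding and the cofactor formula entirely.
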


\begin{proof}
  Assume toward contradiction that $\A$ is a $t$-query quantum algorithm for determining whether $(A^{-1})_{11}$ is non-zero for any invertible $A\in\mathbb{F}_q^{(n+1)\times(n+1)}$, succeeding with probability $p \ge 1/3$ with $t=o(n)$.
  We present a $t$-query algorithm for determining whether an $n\times n$ matrix is full-rank with probability $p(1-1/q)^{2}\geq 1/12$.

  Given access to $M\in\Fq^{n\times n}$, the algorithm first samples two random vectors $u,v\in\Fq^n$ and a random element $a\in\Fq$ to give the padded matrix
  \begin{align}\label{eq:matrix-A}
    A = \left[
    \begin{array}{cc}
      a & u^\top \\ v & M
    \end{array}
    \right].
  \end{align}
  The matrix-vector product $A(x_0,x^\top)^\top$ for $x_0\in\Fq,x\in\Fq^n$ can be computed using one $\mv$ query to $Mx$ since
  \begin{align}
    A \left[\begin{array}{cc} x_0\\ x\end{array}\right] =
    \left[\begin{array}{cc} a_0 + u^\top\! x \\ x_0 v + Mx \end{array}\right].
  \end{align}

  We show that with probability at least $(1-1/q)^2$, the matrix $A$ is invertible (i.e., $\det A\neq 0$) given that $\mathrm{rank}(M)\geq n-1$.
  If $M$ is invertible, the submatrix $B=(v, M)$ is full-rank.
  If $\mathrm{rank}(M)=n-1$, then without loss of generality, we consider the case that the first $n-1$ rows of $M$ are linearly independent, and the last row is a linear combination of the first $n-1$ rows,
  since other cases can be handled accordingly by rearranging the rows.
  We let
  \begin{align}
    M = \left[
    \begin{array}{c}
      M' \\ w^\top
    \end{array}
    \right].
  \end{align}
  for an $(n-1)\times n$ matrix $M'$ and an $n\times 1$ vector $w$.
  Since $w^\top$ is a linear combination of the first $n-1$ rows, we write $w^\top=c^\top\! M'$ for an $(n-1)\times 1$ vector $c$.
  Since $M'$ is full-rank, the vector $c$ satisfying $w^\top=c^\top\! M'$ is unique. Now write the vector
  \begin{align}
    v = \left[
    \begin{array}{c}
      z \\ b
    \end{array}
    \right]
  \end{align}
  for an $(n-1)\times 1$ matrix $z$ and $b\in\mathbb{F}_q$.
  The matrix $B$ is not full rank if and only if the last row is a linear combination of the first $n-1$ rows, i.e., $c^\top\! z=b$, since the first $n-1$ rows of $B$ are linearly independent.
  Since $v$ is a random vector with each element chosen independently, we have
  \begin{align}
    \Pr[\text{$B$ is not full-rank}] = \Pr_{z,b}[c^\top\! z=b] = 1/q.
  \end{align}
  Thus with probability at least $1-1/q$ the matrix $B$ is full-rank.

  Conditioned on $B$ being full-rank, the matrix $A$ is not full-rank if and only if the vector $(a,u^\top)$ is in the vector space spanned by the rows of $B$.
  The number of vectors in the vector space is $q^{(n-1)}$.
  Thus
  \begin{align}
    \Pr_{a,u,v}[\text{$A$ is not full-rank} \mid \text{$B$ is full-rank}] = 1/q.
  \end{align}
  Therefore with probability at least $1-1/q$, $A$ is invertible.
  Conditioned on successfully simulating $\mv$ queries of an invertible $A$, the algorithm $\A$ determines whether $(A^{-1})_{11}$ is nonzero with probability $p$.
  Thus the algorithm succeeds with probability at least $p(1-1/q)^2\geq 1/12$ using $t=o(n)$ queries to $M$.
  By \thm{rank-decision} we have a contradiction.
\end{proof}

The same proof idea shows that a lower bound for rank testing implies a lower bound for linear regression in the $\vmv$ model.
Rashtchian, Woodruff, and Zhu show that the query complexity of distinguishing rank-$n$ matrices from rank-$(n-1)$ matrices over $\Fq$ is $\Omega(n^2)$ \cite[Theorem~3.3]{RWZ20}.

\begin{corollary}\label{cor:regression-vmv}
  The bounded-error classical $\vmv$ query complexity of solving an $n\times n$ linear system over $\Fq$ is $\Omega(n^2)$.
\end{corollary}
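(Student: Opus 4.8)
The plan is to adapt the randomized reduction from the proof of \thm{regression} to the classical $\vmv$ setting, this time reducing from the $\Omega(n^2)$ rank-testing lower bound of Rashtchian, Woodruff, and Zhu \cite[Theorem~3.3]{RWZ20} rather than from \thm{rank-decision}. Suppose toward a contradiction that some bounded-error classical algorithm $\A$ solves every invertible $(n+1)\times(n+1)$ linear system over $\Fq$, succeeding with some constant probability $p$, using $t=o(n^2)$ $\vmv$ queries. I would build from $\A$ an algorithm that distinguishes rank-$n$ from rank-$(n-1)$ matrices $M\in\Fq^{n\times n}$: sample $a\in\Fq$ and $u,v\in\Fq^n$ uniformly, form the padded matrix $A\in\Fq^{(n+1)\times(n+1)}$ exactly as in \eq{matrix-A}, run $\A$ on $A$ with right-hand side $e_1$, and report ``full rank'' iff the first coordinate of the returned solution is nonzero.

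The first step is to check that one $\vmv$ query to $A$ can be answered using a single $\vmv$ query to $M$. Writing the query vectors as $(p_0,\tilde p^\top)^\top$ and $(r_0,\tilde r^\top)^\top$, the bilinear form expands as $(p_0,\tilde p^\top)A(r_0,\tilde r^\top)^\top = \tilde p^\top M\tilde r + a\,p_0 r_0 + p_0\,u^\top\tilde r + r_0\,v^\top\tilde p$, and the reduction can compute the last three terms for free since it chose $a,u,v$ and knows its own query vectors; hence the $\vmv$ query count is preserved. The second step is correctness of the test: deleting the first row and column of $A$ recovers $M$, so whenever $A$ is invertible the first coordinate of $A^{-1}e_1$ equals $(A^{-1})_{11}=\det A^{11}/\det A=\det M/\det A$, which is nonzero exactly when $M$ is full-rank.

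The probability bookkeeping is then identical to that in \thm{regression}. Conditioned on $\rank(M)\geq n-1$, the submatrix $(v,M)$ fails to be full-rank with probability at most $1/q$, and, conditioned on it being full-rank, the row $(a,u^\top)$ lies in its row span with probability at most $1/q$; so $A$ is invertible except with probability at most $1-(1-1/q)^2$, and on that event $\A$ answers correctly with probability at least $p$. Thus the reduction decides the rank-testing problem with constant advantage using $t=o(n^2)$ $\vmv$ queries to $M$, which after standard amplification contradicts \cite[Theorem~3.3]{RWZ20}. Since an $(n+1)\times(n+1)$ linear system is an $N\times N$ system with $N=n+1$, this yields the claimed $\Omega(n^2)$ lower bound.

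I do not expect any real obstacle here, as the argument is essentially a transcription of the proof of \thm{regression}; the only point worth double-checking is that simulating a $\vmv$ query to $A$ incurs no extra queries to $M$, which the bilinear expansion above settles, and indeed the classical $\vmv$ model makes this simulation more transparent than the phase-query manipulations used in the quantum case.
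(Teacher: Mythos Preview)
Your proposal is correct and follows essentially the same approach as the paper: reduce from the $\Omega(n^2)$ rank-testing lower bound of \cite[Theorem~3.3]{RWZ20} via the padded matrix $A$ of \eq{matrix-A}, simulate each $\vmv$ query to $A$ with a single $\vmv$ query to $M$ via the bilinear expansion, and invoke the invertibility analysis from \thm{regression}. The paper's proof is in fact terser---it simply says ``by the same idea as in the proof of \thm{regression}'' and records only the simulation identity $y^\top\! A x = ay_0x_0 + y_0 u^\top\! x_1 + y_1^\top\! v x_0 + y_1^\top\! M x_1$---so your write-up is a faithful expansion of it.
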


\begin{proof}
  By the same idea as in the proof of \thm{regression}, it suffices to show that one $\vmv$ query to the $(n+1)\times(n+1)$ matrix $A$ in \eq{matrix-A} can be simulated with one $\vmv$ query to the $n\times n$ matrix $M$.
  For any query $x,y$, we let $x=(x_0,x_1^\top)^\top$ and $y=(y_0,y_1^\top)^\top$
  for $n\times 1$ matrices $x_1,y_1$.
  The product $y^\top\! A x$ can be computed using one $\vmv$ query to $M$ since $y^\top\! A x = ay_0x_0 + y_0 u^\top\! x_1 + y_1^\top\! v x_0 + y_1^\top\! M x_1$.
  Since no $o(n^2)$-query classical algorithm can distinguish rank-$n$ matrices from rank-$(n-1)$ matrices \cite[Theorem~3.3]{RWZ20}, the bounded-error query complexity of solving linear systems is $\Omega(n^2)$.
\end{proof}

\subsection{Rank testing} \label{sec:rank}

In this section, we show a linear lower bound on distinguishing whether an $m\times n$ matrix $M$ has $\mathrm{rank}(M)=n$ or $\mathrm{rank}(M)\leq n/2$, where $m\geq n$.
First we show the following lemma using ideas from \cite{KNP07}.

\begin{lemma}\label{lem:rank-testing-polynomial-degree}
  Let $\xi\geq 2$ and let $n$ be an even integer. Then any polynomial $f$ satisfying
  \begin{enumerate}[nosep]
  \item $0\leq f(\xi^i)\leq 1$ for $i\in\{0,1,\ldots,n-1\}$ and
  \item $f(1)\leq 1/3$ and $f(\xi^{i})\geq 2/3$ for $i\in\{n/2,n/2+1,\ldots,n-1\}$
  \end{enumerate}
  has $\deg(f)=\Omega(n)$.
\end{lemma}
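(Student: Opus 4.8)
The plan is to follow the strategy of Koiran, Nesme, and Portier \cite{KNP07} that underlies \lem{knp}, but to exploit the \emph{range} of indices on which $f$ is forced to be large rather than a single derivative lower bound. The key obstacle is that unlike in the full-rank-versus-nullity-$1$ case, here we do not have two adjacent sample points with a large gap; instead we have a cluster of points $\xi^{n/2},\ldots,\xi^{n-1}$ where $f\geq 2/3$ and a single point $\xi^0=1$ where $f\leq 1/3$. So a naive mean-value-theorem bound on $|f'|$ only gives $|f'(x_0)|\geq \tfrac{1}{3(\xi^{n/2}-1)}$, which is exponentially small and useless for \lem{knp} directly. The resolution is to rescale: work with the polynomial $g(x)\coloneqq f(\xi^{n/2-1}x)$, or more precisely to argue about $f$ on the shifted family of sample points $\{\xi^i : i\in\{-n/2+1,\dots,n/2\}\}$, so that the ``jump'' happens between consecutive sample points $\xi^{-1}$ (or $\xi^{0}$) and $\xi^{0}$ (or $\xi^{1}$) of the rescaled grid.

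Concretely, I would proceed as follows. First, set $m\coloneqq n/2$ and consider the auxiliary polynomial $h(x)\coloneqq f(\xi^{m}x)$. Then $\deg h=\deg f$, and the hypotheses translate to: $0\leq h(\xi^i)\leq 1$ for $i\in\{-m,\dots,m-1\}$, while $h(\xi^{-m})=f(1)\leq 1/3$ and $h(\xi^{i})\geq 2/3$ for $i\in\{0,1,\dots,m-1\}$. Now there are $2m=n$ sample points $\xi^{-m},\dots,\xi^{m-1}$ on which $|h|\leq 1$, and among them consecutive points $\xi^{-1}$ and $\xi^{0}$ with $h(\xi^{-1})\leq 1$ (bounded) but also — here I would need to be a little careful about which pair actually exhibits the jump. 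Since $h(\xi^{-m})\leq 1/3$ and $h(\xi^0)\geq 2/3$, by a discrete intermediate-value/pigeonhole argument there is some index $i^\ast\in\{-m+1,\dots,0\}$ with $h(\xi^{i^\ast})-h(\xi^{i^\ast-1})\geq \tfrac{1/3}{m}$. By the mean value theorem there is a point $x_0\in[\xi^{i^\ast-1},\xi^{i^\ast}]\subseteq[\xi^{-m},1]$ with $|h'(x_0)|\geq \tfrac{1/3}{m(\xi^{i^\ast}-\xi^{i^\ast-1})}\geq \tfrac{1}{3m\xi^{i^\ast-1}(\xi-1)}$. This is still potentially exponentially small because $\xi^{i^\ast-1}$ can be as small as $\xi^{-m}$. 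To fix this genuinely, I would instead replace the single-jump argument by a \emph{Markov-brothers}–type inequality applied on a suitably chosen subinterval, which is precisely the mechanism inside the proof of \lem{knp}.

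The cleanest route, which I expect to be the intended one, is therefore to reduce directly to \lem{knp} itself. Define $f_0(x)\coloneqq f(\xi^{n/2}x)$; this polynomial has the same degree as $f$, satisfies $|f_0(\xi^i)|\leq 1$ for $i\in\{-n/2,\dots,n/2-1\}$ (an arithmetic progression of $n/2$ — really $n$ — consecutive exponents, after a harmless index shift so exponents run over $\{0,1,\dots,n-1\}$), and has $|f_0(1)-f_0(\xi^{-1})\cdots|$ arranged so that on the interval $[1,\xi]$ (the original $[\xi^{n/2-1},\xi^{n/2}]$, rescaled to $[1,\xi]$) there is a point with derivative at least a constant: indeed on the rescaled grid the values $f_0(\xi^{-1})$ through $f_0(\xi^{-n/2})$ are all bounded while $f_0(1),\dots,f_0(\xi^{n/2-1})\geq 2/3$, and $f_0(\xi^{-n/2})=f(1)\leq 1/3$, so the gap of size $\geq 1/3$ now occurs across a range of \emph{constantly many} grid steps near $x=1$, giving $|f_0'(x_0)|\geq c$ for a constant $c$ and some $x_0\in[\xi^{-1},1]\subseteq[1,\xi]$ after one more trivial rescaling. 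Then \lem{knp} with this $c$ and the given $\xi$ yields $\deg f_0=\Omega(n)$, hence $\deg f=\Omega(n)$. The main thing to get right — and the only real subtlety — is the bookkeeping of which shifted/scaled grid makes the large jump land between $O(1)$-many consecutive sample points of size $\Theta(1)$, so that the mean value theorem produces a \emph{constant} derivative lower bound; everything else is the black-box invocation of \lem{knp}.
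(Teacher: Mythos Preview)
Your reduction to \lem{knp} does not go through. The decisive claim---that after the substitution $f_0(x)=f(\xi^{n/2}x)$ ``the gap of size $\geq 1/3$ now occurs across a range of constantly many grid steps near $x=1$''---is false. Rescaling the argument by a power of $\xi$ is just a relabeling of the grid indices; it cannot concentrate the drop. After your rescaling you know only that $f_0(\xi^{-n/2})\leq 1/3$ and $f_0(\xi^{0})\geq 2/3$, while the intermediate values $f_0(\xi^{-1}),\ldots,f_0(\xi^{-n/2+1})$ are merely in $[0,1]$; the $1/3$-drop may occur entirely in the leftmost step, or be spread uniformly over all $n/2$ steps. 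In neither case do you obtain a \emph{constant} lower bound on $|f_0'|$ at a point of size $\Theta(1)$. Your own pigeonhole calculation already located the obstruction: the best single-step gap you can force is $O(1/n)$, and even that may land at an index $i^\ast$ as small as $-n/2$, so after the further shift needed to place it in $[1,\xi]$ the derivative bound is again exponentially small. No linear change of variables can repair this, because degree, boundedness on a geometric grid, and derivative bounds all transform together.

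The paper does \emph{not} reduce to \lem{knp} as a black box; it reruns the KNP argument with one twist. It accepts the exponentially small derivative bound $|f'(x_0)|\geq\frac{1}{3}\xi^{-n/2}$ coming from the mean value theorem on $[1,\xi^{n/2}]$, and then uses the second hypothesis---boundedness of $f$ at $\xi^{n/2},\ldots,\xi^{n-1}$, not just near the left end---to locate a ``quiet'' interval $S_a=[\xi^a,\xi^{a+1})$ with $a\in\{9n/10,\ldots,n-2\}$ containing no real part of any root of $f'$ or $f''$. There one gets $|f'(\tfrac{\xi^a+\xi^{a+1}}{2})|\leq 2\xi^{-9n/10}$. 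The ratio of these two derivative values is at most $6\xi^{-2n/5}$, while the root-by-root estimate (each factor $\geq 1/(2\xi)$, using $x_0\leq\xi^{n/2}<\xi^a$ and $\Re(a_i)\notin S_a$) bounds the same ratio below by $(2\xi)^{-(d-1)}$, forcing $d=\Omega(n)$. The idea you are missing is that the many \emph{high}-index sample points where $f$ is bounded are precisely what lets one push the quiet interval out to exponent $\approx 9n/10$, well past the jump region; that extra factor of $\xi^{-\Theta(n)}$ in the ratio is what replaces the constant $c$ in \lem{knp}.
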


\begin{proof}
  Let $d=\deg(f)$.
  Toward contradiction, we assume $d=o(n)$.
  For intervals $S_i \coloneqq [\xi^i,\xi^{i+1})$, 
  since $\deg(f'),\deg(f'')=o(n)$, there exists an index $a\in\{9n/10,\ldots,n-3,n-2\}$ such that none of the roots of $f'$ and $f''$ has its real part in $S_a$.
  This implies that $f'$ is monotonically increasing or decreasing in $S_a$, i.e., $f$ is concave or convex.
  In each case, $f(\frac{\xi^a+\xi^{a+1}}{2})\in[0,1]$.
  If $f$ is convex in $S_a$,
  \begin{align}
   \left| f'\Bigl(\frac{\xi^a+\xi^{a+1}}{2}\Bigr)\right| \leq \frac{1}{\xi^{a+1}-\frac{\xi^{a+1}+\xi^a}{2}} = \frac{2}{\xi^{a+1}-\xi^{a}} \leq \frac{2}{\xi^a}\leq 2\xi^{-9n/10}.
  \end{align}
  If $f$ is concave in $S_a$, reflecting about the $x$-axis gives the same bound.

  By the second constraint, there exists $x_0\in[1,\xi^{n/2}]$ such that
  \begin{align}
    |f'(x_0)| \geq \frac{|f(\xi^{n/2})-f(1)|}{\xi^{n/2}-1} \geq \xi^{-n/2}/3.
  \end{align}
  Therefore
  \begin{align}\label{eq:upper-fp}
    \left|\frac{f'(\frac{\xi^a+\xi^{a+1}}{2})}{f'(x_0)}\right| \leq 6\xi^{-2n/5} \leq \xi^{3-2n/5}.
  \end{align}

  On the other hand, since $\deg(f')=d-1$, denoting the roots $a_1,\ldots,a_{d-1}\in\mathbb{C}$, we write
  \begin{align}
    f'(x) = \lambda\prod_{i=1}^{d-1} (x-a_i).
  \end{align}
  Thus
  \begin{align}
    \left|\frac{f'(\frac{\xi^a+\xi^{a+1}}{2})}{f'(x_0)}\right|
    = \prod_{i=1}^{d-1} \Biggl|\frac{\frac{\xi^a+\xi^{a+1}}{2} - a_i}{x_0-a_i} \Biggr| = \prod_{i=1}^{d-1} |g(a_i)|,
  \end{align}
  where
  \begin{align}
    g(x) = \frac{x-\frac{\xi^a+\xi^{a+1}}{2}}{x-x_0}.
  \end{align}

  Our goal is to show that for each $i$, $|g(a_i)|\geq \frac{1}{2\xi}$.
  Recall that for each $i$, $\Re(a_i)\notin S_a$.
  Also for real $x\notin S_a$, $x\geq x_0$, we have $|g(x)|\geq \frac{\xi-1}{2\xi}\geq \frac{1}{2\xi}$.
  For real roots, $|g(a_i)|\geq \frac{1}{2\xi}$.
  Now we consider the case where $a_i=\alpha+\beta i$ for $\beta\neq 0$, giving
  \begin{align}
    |g(\alpha+\beta i)|^2 = \frac{(\alpha-\frac{\xi^a+\xi^{a+1}}{2})^2+\beta^2}{(\alpha-x_0)^2+\beta^2}.
  \end{align}
  If $(\alpha-\frac{\xi^a+\xi^{a+1}}{2})^2\geq (\alpha-x_0)^2$, then $|g(\alpha+\beta i)|\geq 1$.
  Otherwise,
  \begin{align}
    |g(\alpha+\beta i)| \geq \Bigg|\frac{\alpha-\frac{\xi^a+\xi^{a+1}}{2}}{\alpha-x_0}\Bigg| \geq \frac{1}{2\xi}.
  \end{align}
  We have shown that $|g(a_i)|\geq \frac{1}{2\xi}$ for every root $a_i$.
  Now we have
  \begin{align}
    \Bigg|\frac{f'(\frac{\xi^a+\xi^{a+1}}{2})}{f'(x_0)}\Bigg| =\prod_{i=1}^{d-1} |g(a_i)| \geq (2\xi)^{-d+1} \geq \xi^{2-2d}.
  \end{align}
  Thus by \eq{upper-fp}, we have $\xi^{3-2n/5}\geq\xi^{2-2d}$ and conclude $d\geq n/5-1/2=\Omega(n)$---a contradiction.
\end{proof}

\lem{symmetrization} and \lem{rank-testing-polynomial-degree} imply the following theorem.

\begin{theorem}\label{thm:rank-testing}
  The bounded-error matrix-vector quantum query complexity of determining whether a matrix $M\in\mathbb{F}_q^{m\times n}$ has $\mathrm{rank}(M)=n$ or $\mathrm{rank}(M)\leq n/2$ is $\Omega(n)$.
\end{theorem}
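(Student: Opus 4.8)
The plan is to combine the symmetrization lemma (\lem{symmetrization}) with the polynomial-degree lower bound just established (\lem{rank-testing-polynomial-degree}), exactly mirroring how \thm{rank-decision} was derived from \lem{symmetrization} and \lem{knp}. As in \sec{nullspace}, without loss of generality I assume $m\geq n$ (the case $m<n$ follows from \thm{left-right-eq} by passing to $M^\top$), and I take $n$ to be even (otherwise round down and lose only a constant factor). Starting from a supposed $t$-query bounded-error algorithm $\A$ that distinguishes $\rank(M)=n$ from $\rank(M)\leq n/2$, I take its acceptance probability $P(\delta(M))$, which by \lem{polynomial-2t} is a polynomial of degree at most $2t$ in the indicator variables $\delta_{xy}=\delta[Mx=y]$, and symmetrize it by averaging over all matrices of nullity $d$ to obtain $Q(d)=\Exp_{M\sim Y_d}[P(\delta(M))]$ with $0\leq Q(d)\leq 1$. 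By \lem{symmetrization} there is a polynomial $R$ of degree at most $2t$ with $R(q^d)=Q(d)$ for every $d\in\{0,1,\dots,n-1\}$.

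Next I translate the correctness guarantee of $\A$ into the hypotheses of \lem{rank-testing-polynomial-degree} with $\xi=q\geq 2$. A matrix of nullity $d=0$ has rank $n$, so $Q(0)=R(1)$ is the acceptance probability on a full-rank instance, hence $R(1)\leq 1/3$ (taking $\A$'s error bounded by $1/3$, and negating the output if necessary so that ``accept'' means ``rank $\leq n/2$''; alternatively take $R(1)\geq 2/3$ and reflect). A matrix of nullity $d\geq n/2$ has $\rank(M)=n-d\leq n/2$, so for $i\in\{n/2,n/2+1,\dots,n-1\}$ we have $Q(i)=R(q^i)\geq 2/3$. Together with $0\leq R(q^i)\leq 1$ for all $i\in\{0,\dots,n-1\}$, these are precisely conditions (1) and (2) of \lem{rank-testing-polynomial-degree}. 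That lemma then forces $\deg(R)=\Omega(n)$, and since $\deg(R)\leq 2t$ we conclude $t=\Omega(n)$, as claimed.

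The only genuinely delicate point is that the instances used here span the whole range of nullities from $0$ up to $n/2$, so I must make sure \lem{symmetrization} is invoked for all $d\in\{0,\dots,n-1\}$ (or at least for $d\in\{0\}\cup\{n/2,\dots,n-1\}$ together with the trivial bounds $0\le Q(d)\le 1$ for the intermediate $d$), rather than merely for $d\in\{0,1\}$ as in the proof of \thm{rank-decision}; the statement of \lem{symmetrization} already provides exactly this, so no new work is needed. I also note that the hard technical content—controlling the derivative ratio of $f'$ via its complex roots lying outside the interval $S_a$—has already been carried out in \lem{rank-testing-polynomial-degree}, so the present proof is just the reduction glue. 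A minor bookkeeping check is that the ``gap'' between rank $n$ and rank $\leq n/2$ corresponds to a gap in nullity between $0$ and $\geq n/2$, which is what \lem{rank-testing-polynomial-degree} is tuned for; the even-$n$ assumption makes $n/2$ an integer and is harmless.
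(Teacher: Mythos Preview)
Your proposal is correct and follows exactly the approach the paper takes: the paper's proof of \thm{rank-testing} consists solely of the sentence ``\lem{symmetrization} and \lem{rank-testing-polynomial-degree} imply the following theorem,'' and you have spelled out precisely that combination, including the same reductions (to $m\geq n$ via \thm{left-right-eq}, to even $n$) and the same translation of the bounded-error guarantee into the hypotheses of \lem{rank-testing-polynomial-degree} with $\xi=q$.
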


\section*{Acknowledgments}

We thank Robin Kothari for bringing our attention to work on classical algorithms in the matrix-vector and vector-matrix-vector query models, and for providing feedback on an initial version of this paper. We thank Max Simchowitz and Blake Woodworth for a discussion that clarified aspects of their paper \cite{BHSW20}, and Jialin Zhang for clarifications of her paper \cite{SWYZ19}. We also thank Ashley Montanaro for pointing out connections to his paper \cite{MS20}, and Joran van Apeldoorn and Sander Gribling for a discussion of their paper \cite{vAG18}.

AMC and SHH acknowledge support from the Army Research Office (grant W911NF-20-1-0015); the Department of Energy, Office of Science, Office of Advanced Scientific Computing Research, Quantum Algorithms Teams and Accelerated Research in Quantum Computing programs; and the National Science Foundation (grant CCF-1813814). TL acknowledges support from ARO contract W911NF-17-1-0433, NSF grant PHY-1818914, and a Samsung Advanced Institute of Technology Global Research Partnership.

\end{document}